\newmdenv[topline=false, bottomline=false, skipabove=\topsep, skipbelow=\topsep]{siderules}
\newtheorem{theorem}{Theorem}
\newtheorem{corollary}{Corollary}
\newtheorem{proposition}{Proposition}
\newtheorem{lemma}{Lemma}
\newtheorem{definition}{Definition}
\def\sH{\mathscr{H}}
\def\sK{\mathscr{K}}
\def\sP{\mathscr{P}}
\def\cA{{\mathfrak A}}
\def\cB{{\mathfrak B}}
\def\cM{{\mathfrak M}}
\def\cN{{\mathfrak N}}
\newcommand{\ca}[1]{{\cal #1}}
\newcommand{\ben}{\begin{equation}}
\newcommand{\een}{\end{equation}}
\def\bena{\begin{eqnarray}}
\def\eena{\end{eqnarray}}
\def\cA{{\ca A}}
\def\cB{{\ca B}}
\def\cF{{\ca F}}
\def\cH{{\ca H}}
\def\cM{{\ca M}}
\def\cN{{\ca N}}
\renewcommand{\H}{\mathscr{H}}
\newcommand{\F}{\mathcal{F}}
\def\1{{\mathds{1}}}
\def\Hom{{\mathrm{Hom}}}
\newcommand{\dd}{{\rm d}}
\newcommand{\tr}{\operatorname{Tr}}
\renewcommand{\log}{\operatorname{ln}}
\renewcommand{\epsilon}{\varepsilon}
\newcommand{\RR}{\mathbb{R}}
\newcommand{\CC}{\mathbb{C}}
\begin{document}
\title{Variational approach to relative entropies (with application to QFT)
}

	\author{Stefan Hollands$^{1}$\thanks{\tt stefan.hollands@uni-leipzig.de}\\
	{\it $^1$ ITP, Universit\" at Leipzig, MPI-MiS Leipzig and KITP, Santa Barbara}
	}

\date{\today}
	
\maketitle

\begin{abstract}
We  define a new divergence of von Neumann algebras using a variational expression
that is similar in nature to Kosaki's formula for the relative entropy. Our divergence 
satisfies the usual desirable properties, upper bounds the sandwiched Renyi entropy and reduces 
to the fidelity in a limit. As an illustration, we use the 
formula in quantum field theory to compute our divergence between the vacuum in a bipartite system and 
an ``orbifolded'' -- in the sense of conditional expectation -- system in terms of the Jones index. We take the opportunity to 
point out entropic certainty relation for arbitrary von Neumann subalgebras of a factor related to the relative entropy. This certainty 
relation has an equivalent formulation in terms of error correcting codes.
\end{abstract}

\section{Introduction}

The relative entropy between two density operators $\rho, \sigma$, defined as 
\ben\label{eq:Srel}
S(\rho| \sigma) = \tr[\rho (\log \rho - \log \sigma)], 
\een
is an asymptotic measure of their distinguishability. Classically, $e^{-NS(\{p_i\} | \{q_i\})}$ approaches for large $N$ 
the probability for a sample of size $N$ of letters, distributed according to the 
true distribution $\{p_i\}$, when calculated according to an incorrect guess $\{q_i\}$. In the non-commutative setting, the relative entropy has been generalized to von Neumann algebras of arbitrary type by Araki \cite{Araki1,Araki2} using relative modular hamiltonians. 

By far the most fundamental property of $S$ -- from which in fact essentially all others follow -- is its monotonicity under a channel. A channel between von Neumann algebras is a completely positive normal linear map, i.e. roughly an arbitrary combination of (i) a unitary time evolution of the density matrix, (ii) a von Neumann measurement followed by post-selection, (iii) forgetting part of the system (partial trace). The fundamental property is that if $T$ is such a channel and its application to a density matrix is $T[\rho]$ (Schr\" odinger picture\footnote{In the main text we will 
think of $T$ in the Heisenberg picture, i.e. acting on the algebra of observables. Then $\rho(a)=\tr(a\rho)$ is 
identified with a functional on the algebra and $\rho[T]$ corresponds to $\rho \circ T$.}), then always \cite{Uhlmann1977} 
\ben\label{eq:dpi}
S(\rho | \sigma) \ge S( T[\rho] | T[\sigma]). 
\een
In quantum information theory, $T$ is related to data processing, so \eqref{eq:dpi} is sometimes called the data-processing inequality (DPI).

$S$ plays an important role when characterizing the entanglement between subsystems.
Over the years, several generalizations of $S$ with different operational meaning 
have therefore been given, see e.g. \cite{Petz1993}. One such 
generalization is the 1-parameter family of ``sandwiched relative Renyi divergences (entropies)''  $D_s$ proposed by \cite{mueller}.
They interpolate between $S$ and the fidelity $F$, have an operational meaning, and in fact play a major role in recent proofs of improved DPIs for $S$, see \cite{paperI, Junge}. 

The purpose of this note is to point out related variational expression, $\Phi_s$, [eq. \eqref{eq:cst}] 
inspired by a corresponding characterization of $S$ due to Kosaki \cite{Kosaki}. Our formula makes sense for arbitrary 
von Neumann algebras.\footnote{While general von Neumann algebras are not standard in Quantum Information Theory, they are important in other physical applications. For example, in quantum field theory, type $III$ factor are relevant \cite{buchholz}.} It is an upper bound for the sandwiched 
relative Renyi entropies, $D_s$, and it has an interpolating character involving the fidelity and it reduces to that in a limit.  Just as in the case of $S$, the formula is typically not suitable for calculating but can be useful for generalizations (e.g. to $C^*$-algebras or even algebras of unbounded operators), proofs or inequalities. In fact, as we will see, essentially all interesting properties of $\Phi_s$ are simple corollaries of our variational formulas.

One example for this is 
the data processing inequality for $\Phi_s$. 
As another example, we give an application of the 
formula in quantum field theory (QFT). We consider a Haag-Kastler QFT $\cF$ and a subtheory $\cA$, so $\cA \subset \cF$. 
If $A_n, B_n$ are disjoint regions separated by a corridor of size $\sim 1/n$ we can consider 
a conditional expectation ``$E_{A_n} \otimes E_{B_n}$'' projecting $\cF(A_n) \vee \cF(B_n)$ to  $\cA(A_n) \vee \cA(B_n)$.  
The partial state of the vacuum with respect to the subsystem $\cF(A_n) \vee \cF(B_n)$ is called $\omega_\Omega$.
 We show [thm. \ref{thm:1}]
\ben
\lim_{n \to \infty} \Phi_s(\omega_\Omega | E_{A_n} \otimes E_{B_n}[\omega_\Omega]) = \log [\cF:\cA],
\een
which yields a formula \eqref{eq:Fform} for $F$ (fidelity) as a limiting case. Here $[\cF:\cA]$ is the Jones index \cite{Jones,Kosaki1}, whose values 
are restricted to $\{ 4 \cos^2(\pi/n) : n=3,4,\dots\} \cup [4,\infty]$. 
An example is a subtheory $\cA \subset \cF$
of charge neutral operators under a finite gauge group $G$, in which case $[\cF:\cA]=|G|$.\footnote{It has recently been proposed \cite{faulkner}
that the setup of inclusions with conditional expectation may be a model for holography, wherein $\cA,\cF$ correspond to the bulk respectively boundary theory. In such a setting relative entropies between 
$\omega_\Omega$ and $E_{A_n} \otimes E_{B_n}[\omega_\Omega]$ are related to area terms.} Similar results can be obtained in analogous settings
in higher dimensions. 

We also point out a dual 
result for the inclusion $\cF' \subset \cA'$ and the dual conditional expectations $E_n'$ in the case of the fidelity. 
This last result is a consequence of an 
``entropic (un)certainty relation'' (for a review see \cite{Winter}), given in cor.s \ref{cor:3}, \ref{cor:5}, which generalize a result by \cite{pontello} to 
Renyi entropies and general types of Neumann algebras. 
A noteworthy special case of cor. \ref{cor:5} is the following. Consider an inclusion $\cM \supset \cN$, with $\cM$ a factor and $E:\cM \to \cN$ the corresponding conditional expectation with dual conditional expectatation $E':\cN' \to \cM'$. Then we have\footnote{Here $E[\omega_\psi]$ is the 
dual action of the conditional expectation on the partial state (Schr\" odinger picture). In the main text, we write this as $\omega_\psi \circ E$.}
\ben
F_\cM(\omega_\psi | E[\omega_\psi]) \cdot F_{\cN'}(\omega_\psi' | E'[\omega_\psi']) \ge \frac{1}{\sqrt{[\cM:\cN]}}.
\een
Here, $|\psi\rangle$ is a pure state, $\omega_\psi$ the corresponding partial state (density matrix) on $\cM$ and 
$\omega'_\psi$ that on $\cN'$. $F$ is the fidelity between two states. Such relations remind one of the Heisenberg uncertainty principle, 
and connections to various entropic (un)certainty relations are indeed known to exist, see e.g. \cite{Winter}. 
We plan to come back to this in the future.

\medskip
\noindent
{\bf Notations and conventions:} Calligraphic letters $\cA, \cM, \dots$ denote von Neumann algebras. Calligraphic letters $\sH, \sK, \dots$ denote linear spaces.  We use the physicist's ``ket''-notation $|\psi\rangle$ for vectors in a Hilbert space. 
The scalar product is written as 
$ \langle \psi | \psi' \rangle$
and is anti-linear in the first entry. The norm of a vector is written simply as
$\| |\psi\rangle \| =: \| \psi \|$. 
Each vector $|\psi\rangle \in \H$ gives rise to a positive definite linear functional on the von Neumann algebra $\cM$
acting on $\sH$ via
\ben
\label{eq:fnc}
\omega_\psi(m) = \langle \psi | m \psi \rangle, \quad m \in \cM.
\een
The commutant of $\cM$ is denoted as $\cM'$ and consists of those bounded operators commuting with all elements of $\cM$.

\section{Von Neumann algebras and relative entropy}\label{sec:2}

\subsection{Relative modular theory and entropy}
Let $(\cM, J, \sP_\cM^\natural, \sH)$ be a von Neumann algebra in standard form acting on a Hilbert space $\sH$, 
with natural cone $\sP^\sharp_\cM$ and modular conjugation $J$ (for an explanation of these terms, 
see \cite{Bratteli,Takesaki} as general references). 
We will use relative modular operators $\Delta_{\psi,\zeta}$ associated with two vectors  
$|\zeta\rangle, |\psi\rangle \in \sH$ in our constructions. Let $|\psi \rangle, |\zeta \rangle \in \sP^\natural$. Then
there is a non-negative, self-adjoint operator $\Delta_{\psi,\zeta}$ characterized by
\ben 
\label{modA}
J \Delta_{\psi,\zeta}^{1/2}\left( a \left| \zeta \right> + \left| \chi \right> \right) = \pi^\cM(\zeta) a^* \left| \psi \right>\,,  
\quad \forall \,\, a \in \cM \,,\,\, \left| \chi \right> \in (1-\pi^{\cM'}(\zeta)) \mathscr{H} .
\een
Here, $\pi^{\cM'}(\psi)$ is the support projection of the vector $|\psi\rangle$, defined as the orthogonal projection onto $\cM |\psi\rangle$.
The non-zero support of $\Delta_{\psi,\zeta}$ is $\pi^\cM(\psi) \pi^{\cM}(\zeta) \mathscr{H}$, and the functions $\Delta_{\psi,\zeta}^z$
are understood via the functional calculus on this support and are defined as $0$ on $1-\pi^\cM(\psi) \pi^{\cM}(\zeta)$. 

According to \cite{Araki1,Araki2}, if the support projections satisfy
$\pi^\cM(\psi) \ge \pi^\cM(\zeta)$, the relative entropy may be defined by 
\ben
S(\zeta | \psi) = -\lim_{\alpha \to 0^+} \frac{\langle \zeta | \Delta^\alpha_{\psi, \zeta} \zeta \rangle-1}{\alpha} ,  
\een
otherwise, it is by definition infinite. The relative entropy may be viewed as a function of the 
functionals $\omega_\psi, \omega_\zeta$ on $\cM$. So one can write instead also $S(\omega_\zeta | \omega_\psi)$ without ambiguity. In the case of the matrix algebra $M_n(\CC)$, where $\omega_\zeta$ and $\omega_\psi$ are identified with density matrices as $\omega_\psi(a) = \tr(a\omega_\psi)$ etc., the relative entropy is the usual expression \eqref{eq:Srel}.

Kosaki \cite{Kosaki} has given the following variational formula for two normalized state
functionals $\omega_\psi, \omega_\zeta$ on $\cM$:
\ben\label{Kosaki}
S(\omega_\zeta | \omega_\psi) = \sup_{n \in {\mathbb N}} \sup_{x: (1/n,\infty) \to \cM} \left\{
\log n - \int_{1/n}^\infty [ \omega_\zeta(x(t)^* x(t)) + t^{-1} \omega_\psi(y(t) y(t)^*)] t^{-1} \dd t
\right\},
\een
where the second supremum is over all step functions $x(t)$
valued in $\cM$ with finite range where $y(t) = 1-x(t)$. \eqref{Kosaki} no longer makes explicit reference 
to modular theory and the dependence on the state functionals (as opposed to vectors) is manifest. 
Some uses of Kosaki's formula are discussed e.g., in \cite{Petz1993}, ch. 5.

\subsection{Conditional expectations, index, and relative entropy}

Let $\cM, \cN$ be two von Neumann algebras. A linear operator $T:\cM \to \cN$ is called a channel 
if it is ultra-weakly continuous (``normal''), unital $T(1)=1$, and completely positive. 
The latter means that the induced mapping $T \otimes id_n: \cM \otimes M_n \to \cN \otimes M_n$, 
with $M_n$ the full matrix algebra of rank $n$, maps non-negative elements to non-negative elements. 
In particular $T(m^* m)$ is a self-adjoint operator in $\cN$ with non-negative spectrum.

If $\cN \subset \cM$ is a von Neumann sub-algebra, then a quantum channel $E: \cM \to \cN$ is called 
a conditional expectation if 
\ben
E(n_1 m n_2)= n_1 E(m) n_2
\een
for $m \in \cM, n_i \in \cN$. The space of such conditional expectations is called $C(\cM, \cN)$. 
A faithful normal operator valued weight is an unbounded and unnormalized positive linear map $N: \cM \to \cN$
with the same bimodule property and with dense domain $\cM_+$ ($=$ non-negative elements of $\cM$) 
 \cite{haagerup}. The space of such operator-valued weights is denoted $P(\cM, \cN)$, 
and clearly $C(\cM, \cN)$ is a subset thereof. Both $C(\cM,\cN)$ and $P(\cM,\cN)$ may be empty.  

Let $\cM$ be a factor. If there exists $E \in C(\cM,\cN)$, then the best constant $\lambda>0$ such that 
\ben
\label{eq:pp}
E(m^*m) \ge \lambda^{-1} m^*m \quad \text{for all $m \in \cM$}
\een
is called $ind(E)$, the index of $E$. 
If there is any conditional expectation at all, then there is one for which $\lambda$ is minimal \cite{Hiai}. 
This $\lambda = [\cM:\cN]$ is the Jones-Kosaki index of the inclusion  \cite{Jones,Kosaki1,popa}. 

Haagerup  \cite{haagerup} has established a canonical correspondence 
$N \in P(\cM,\cN) \leftrightarrow N^{-1} \in P(\cN', \cM')$ satisfying $(N^{-1})^{-1} = N, (N_1 \circ N_2)^{-1} = N_2^{-1} \circ N_1^{-1}$. 
One can connect this to the notion of a ``spatial derivative'' \cite{connes}. To this end, let $\cM$ be a von Neumann algebra acting on $\H$, 
let $|\zeta \rangle, |\psi\rangle \in \H$. Applying \eqref{eq:fnc} to $\cM$ and the commutant $\cM'$, we get state functionals 
$\omega_\zeta'$ respectively $\omega_\psi$ on $\cM'$ respectively $\cM$. Now the functional $\omega_\psi: \cM \to \CC$ is a
special case of a conditional expectation, so the dual conditional expectation $\omega_\psi^{-1}$ is in $P(B(\H), \cM')$. 
Thus, $\omega'_\zeta \circ \omega_\psi^{-1}$ is a weight on $B(\H)$. Such a weight defines a densely defined positive 
definite (sesqulinear) quadratic form on $\H$ by
\ben
q_{\psi,\zeta}(\phi_1, \phi_2) = \omega'_\zeta \circ \omega_\psi^{-1}(|\phi_2\rangle\langle \phi_1|), 
\een
and the operator $T$ on $\H$ representing $q_{\psi,\zeta}$ is called the ``spatial derivative'', $\Delta_{\cM}(\omega'_\zeta / \omega_\psi)$. 
It can be seen to only depend on the functionals $\omega_\zeta'$ respectively $\omega_\psi$ on $\cM'$ respectively $\cM$.
$\Delta_{\cM}(\omega'_\zeta / \omega_\psi)$ equals the relative modular operator $\Delta_{\cM;\zeta,\psi}$ in case $|\psi\rangle \in {\mathscr P}_\cM$.
It follows that if $|\zeta\rangle$ is in the form domain of $\log \Delta_{\cM}(\omega'_\zeta / \omega_\psi)$, then the relative entropy may also be written 
as 
$
S(\zeta |\psi) = \langle \zeta | \log \Delta_{\cM}(\omega'_\zeta / \omega_\psi) \zeta\rangle.
$
This representation and the structures established by \cite{connes,haagerup} have an immediate corollary for a conditional expectation 
$E: \cM \to \cN$. First, by \cite{connes}, thm. 9,
the spatial derivative has the duality property 
\ben
\Delta_{\cM}(\omega'_\zeta / \omega_\psi) = \Delta_{\cM'}(\omega_\psi / \omega'_\zeta)^{-1}.
\een
Furthermore, 
$
\omega'_\psi \circ (\omega_\psi \circ E)^{-1} = (\omega'_\psi \circ E^{-1}) \circ \omega_\psi^{-1}, 
$
so \cite{Kosaki1}
\ben
\Delta_{\cM}(\omega'_\psi / \omega_\psi \circ E) =  \Delta_{\cN}(\omega'_\psi \circ E^{-1}/ \omega_\psi) 
= \Delta_{\cN'}(\omega_\psi/\omega'_\psi \circ E^{-1})^{-1}.
\een
Taking a log and the expectation value with respect to the vector $|\psi\rangle$ then gives: 
\ben
S_{\cM}(\omega_\psi | \omega_\psi \circ E) + S_{\cN'}(\omega_\psi' | \omega_\psi' \circ E^{-1}) = 0.
\een
Note that $E^{-1}$ is not normalized unless $E=id$. If $\cM$ is a factor such that 
$ind(E)=\lambda < \infty$ is finite, then it can be shown from \eqref{eq:pp} that $1$ is in the domain of $E^{-1}$ and $\lambda 1 = E^{-1}(1)$. Therefore 
\ben
E' = \lambda^{-1}E^{-1}
\een
is a (normalized) conditional expectation $E' \in C(\cN', \cM')$ \cite{Kosaki1}. In fact, if $E$ is minimal, then also $E'$ is. 
Using the standard scaling properties of the relative entropy thereby gives the following trivial corollary 
which generalizes \cite{pontello} who have considered by an explicit method 
the special case of finite dimensional type I von Neumann algebras:
\begin{corollary}
\label{cor:3}
Let $\cN \subset \cM$ be a von Neumann subalgebra of a von Neumann factor $\cM$ with finite index $[\cM:\cN]<\infty$, acting on a Hilbert space $\H$. Assume that $E \in C(\cM,\cN)$ is the minimal conditional expectation, $E' \in C(\cN', \cM')$ the dual minimal conditional expectation. 
For $|\psi\rangle \in \H$, we have
\ben\label{cert}
S_{\cM}(\omega_\psi | \omega_\psi \circ E) + S_{\cN'}(\omega_\psi' | \omega_\psi' \circ E') = \log [\cM:\cN].
\een
(Note that $\omega_\psi'$ in the second expression means the functional \eqref{eq:fnc} on $\cN'$ etc.)
\end{corollary}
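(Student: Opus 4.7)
The plan is to combine two ingredients that the excerpt has already assembled with the elementary scaling law for Araki's relative entropy. The first ingredient is the identity
\begin{equation*}
S_{\cM}(\omega_\psi | \omega_\psi \circ E) + S_{\cN'}(\omega_\psi' | \omega_\psi' \circ E^{-1}) = 0, \qquad (\star)
\end{equation*}
derived just above the corollary from Connes's spatial-derivative duality together with the representation $S(\zeta|\psi) = \langle \zeta | \log \Delta_{\cM}(\omega_\zeta'/\omega_\psi)\zeta\rangle$. The second is the index normalization $E^{-1}(1) = [\cM:\cN]\cdot 1$, which the excerpt extracts from \eqref{eq:pp} and which implies $E' = [\cM:\cN]^{-1} E^{-1}$, so that
\begin{equation*}
\omega_\psi' \circ E^{-1} = [\cM:\cN] \cdot (\omega_\psi' \circ E').
\end{equation*}

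Next I would record the scaling property of Araki's relative entropy: for a normal state $\omega$ on a von Neumann algebra and a positive normal functional $\tau$ with $\omega \ll \tau$, and any $c>0$,
\begin{equation*}
S(\omega | c\tau) = S(\omega | \tau) - \log c.
\end{equation*}
In the spatial-derivative language used in the excerpt this is immediate: replacing the second argument $\omega_\psi$ by $c\,\omega_\psi$ sends $\Delta_{\cM}(\omega_\zeta'/\omega_\psi)$ to $c^{-1}\Delta_{\cM}(\omega_\zeta'/\omega_\psi)$, so its logarithm is shifted by $-\log c$, and so is the expectation value in any unit vector.

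With these two ingredients in hand, the corollary is immediate. I apply the scaling law with $c = [\cM:\cN]$ to obtain
\begin{equation*}
S_{\cN'}(\omega_\psi' | \omega_\psi' \circ E^{-1}) = S_{\cN'}(\omega_\psi' | \omega_\psi' \circ E') - \log [\cM:\cN],
\end{equation*}
substitute into $(\star)$, and rearrange to produce \eqref{cert}. There is no substantive obstacle: all the analytic work — Connes's spatial-derivative formalism, Haagerup's duality, the estimate \eqref{eq:pp} yielding $E^{-1}(1) = [\cM:\cN]\cdot 1$, and the consequent $E' \in C(\cN',\cM')$ — has been carried out in the lead-up to the corollary. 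The only point needing a word of care is the scaling law, but this is a standard property of $S$ that follows directly from its definition.
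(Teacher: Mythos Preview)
Your proposal is correct and follows precisely the paper's own route: the paper simply states that ``using the standard scaling properties of the relative entropy thereby gives the following trivial corollary,'' combining the already-derived identity $S_{\cM}(\omega_\psi | \omega_\psi \circ E) + S_{\cN'}(\omega_\psi' | \omega_\psi' \circ E^{-1}) = 0$ with $E' = \lambda^{-1}E^{-1}$ and the scaling law for $S$. You have spelled out exactly these three ingredients and their combination, with more detail than the paper provides.
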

Results of a similar flavor have also been given by \cite{Xu1}. Very interesting physical applications of the above ``certainty relation'' \eqref{cert} involving
Wilson- and `t Hooft operators in 4 dimensional quantum Yang-Mills theory have recently been pointed out by \cite{pontello,Casini1}. In such a 
situation the algebras are expected to be of type III \cite{buchholz}.

Then, the minimal conditional expectation $E$ and its dual $E'$ can be described more explicitly using Q-systems \cite{longo2}, see app. \ref{sec:app}. In this framework, $\cM$ is generated by $\cN$ together with a single operator, $v$, 
and $\cN'$ is generated by $\cM'$ together with a single operator, $v'$. The operators $w=j_\cN(v') \in \cN$, $w'=j_\cM(v) \in \cM'$ and the ``canonical'' endomorphsms
\ben
\gamma=j_\cN j_\cM : \cM \to \cN, \quad \gamma' = j_\cM j_\cN: \cN' \to \cM'
\een
can be defined  (here $d=[\cM:\cN]^{1/2}$), where $j_\cN(n)=J_\cN n J_\cN$ and $J_\cN$ is the modular conjugation\footnote{With respect to a
fixed natural cone ${\mathscr P}^\sharp_\cN$.} of $\cN$, etc. The expectations $E,E'$ are then given by
\ben
E(m)=\frac{1}{d} w^* \gamma(m) w, \quad E'(n')= \frac{1}{d}  w^{\prime *} \gamma'(n') w^\prime.
\een
Another property is that $J_\cM v'= v' J_\cN, J_\cM v= v J_\cN$.

The operator $v'$ is closely related to the idea of quantum error correcting codes as described by \cite{faulkner}: 
For the sake of easier comparison, define 
\ben
V:= v'/\sqrt{d}, \quad V':= v/\sqrt{d}, 
\een
with the normalizations made such that $V, V'$ are isometries.
For any $|\psi\rangle, |\zeta\rangle \in \H$ we have the implications
\ben\label{eq:code}
\begin{cases}
\omega_{\zeta}|_{\cN'} = \omega_{\psi}|_{\cN'} \Longrightarrow & \omega_{V\zeta}|_{\cM'} = \omega_{V\psi}|_{\cM'}\\
\omega_{\zeta}|_{\cN} = \omega_{\psi}|_{\cN} \ \  \Longrightarrow & \omega_{V\zeta}|_{\cM} = \omega_{V\psi}|_{\cM}, 
\end{cases}
\een
so $\cM$ is ``standardly c-reconstructible'' from $\cN$ in the terminology \cite{faulkner}. In the context of holography, $\cN$ would be a bulk 
observable algebra, $\cM$ a corresponding CFT algebra and the subspace $V\H \subset \H$ the ``code subspace''. Dually, the operator
$V'$ is used in a similar way to ``standardly c-reconstruct'' $\cN'$ from $\cM'$, with similar relations. While the existence and properties 
of the operator $V$ are equivalent to the existence of some conditional expectation $E:\cM \to \cN$ alone \cite{faulkner}, thm. 7, the existence of 
the operator $V'$ for the dual code does not follow from these results but requires a finite index (and minimal conditional expectation). 

These facts can be used to give an ``error correction version'' of the certainty relation expressed by cor. \ref{cor:3}.  We simply observe the equalities
\ben
E(m) = \frac{1}{d} w^* \gamma(m) w =  \frac{1}{d} j_\cN(v^{\prime *}) j_\cN j_\cM(m) j_\cN(v^{\prime }) = 
\frac{1}{d} J_\cN v^{\prime *} J_\cM m J_\cM v^{\prime } J_\cN = V^* m V
\een
for $m \in \cM$.
Dually, we get $E'(n') = V^{\prime *} n' V'$ for $n' \in \cN'$. This gives in view of cor. \ref{cor:3}:

\begin{corollary} [Error correcting code version]\label{cor:ecc}
Let $\cM \supset \cN$ be an inclusion of type III von Neumann factors with finite index and let $|\psi\rangle \in \H$. Let $V$ be 
a code operator as in \eqref{eq:code} and $V'$ the dual code operator. Then 
\ben
S_{\cM}( \omega_\psi  | \omega_{V\psi} ) + S_{\cN'}(\omega_\psi | \omega_{V'\psi}) = \log [\cM:\cN]. 
\een
\end{corollary}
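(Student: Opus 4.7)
The plan is to reduce Corollary \ref{cor:ecc} to the certainty relation of Corollary \ref{cor:3} using the explicit Q-system identities $E(m)=V^* m V$ and $E'(n')=V^{\prime *} n' V'$ derived just before the statement. Indeed, for any $m \in \cM$,
$$(\omega_\psi \circ E)(m) = \langle \psi | E(m) \psi \rangle = \langle \psi | V^* m V \psi \rangle = \langle V\psi | m\, V\psi \rangle = \omega_{V\psi}(m),$$
so $\omega_\psi \circ E = \omega_{V\psi}$ as normal positive functionals on $\cM$. Since the Araki relative entropy depends only on the underlying state functionals and not on the particular vectors implementing them, this gives $S_{\cM}(\omega_\psi | \omega_\psi \circ E) = S_{\cM}(\omega_\psi | \omega_{V\psi})$.

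The analogous computation applied to $E'(n') = V^{\prime *} n' V'$ for $n' \in \cN'$ shows that $\omega_\psi' \circ E'$ coincides on $\cN'$ with the functional $n' \mapsto \langle V'\psi | n' V'\psi \rangle$, which in the paper's conventions is denoted $\omega_{V'\psi}$ (the symbol $\omega$ taking its meaning from the algebra on which it is evaluated). Hence $S_{\cN'}(\omega_\psi' | \omega_\psi' \circ E') = S_{\cN'}(\omega_\psi | \omega_{V'\psi})$. Substituting both identities into the certainty relation of Corollary \ref{cor:3} (which applies since $\cM$ is a factor of finite index; the type $III$ hypothesis is not logically needed, it is the natural physical setting) immediately yields
$$S_{\cM}(\omega_\psi | \omega_{V\psi}) + S_{\cN'}(\omega_\psi | \omega_{V'\psi}) = \log[\cM:\cN],$$
as claimed.

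No step here is a genuine obstacle: all of the substantive work has already been done in establishing the Q-system identities $E(m) = V^* m V$ and $E'(n') = V^{\prime *} n' V'$, and in proving Corollary \ref{cor:3} itself via Haagerup's duality for operator-valued weights and the relation $E' = \lambda^{-1} E^{-1}$. The content of Corollary \ref{cor:ecc} is simply the observation that, once one knows the minimal conditional expectations are implemented as compressions by the code isometries $V, V'$, the states $\omega_\psi \circ E$ and $\omega_\psi' \circ E'$ appearing in the certainty relation are nothing but the vector states of $V|\psi\rangle$ and $V'|\psi\rangle$ on the respective algebras.
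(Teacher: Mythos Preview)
Your proof is correct and follows exactly the paper's approach: the paper also derives $E(m)=V^*mV$ and $E'(n')=V^{\prime *}n'V'$ in the lines immediately preceding the corollary and then invokes Corollary~\ref{cor:3}. The only additional remark is that the type~III assumption is used in the paper's construction of the Q-system operators $v,v'$ (and hence of $V,V'$) and the intertwining relations $J_\cM v'=v'J_\cN$ underlying the identity $E(m)=V^*mV$, so while Corollary~\ref{cor:3} itself does not require it, the very definition of the code operators in this framework does.
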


\subsection{Sandwiched Renyi divergence}
A family of entropy functionals for von Neumann algebras extrapolating the relative entropy are the ``sandwiched Renyi divergences (entropies)'' \cite{mueller}.
In the general von Neumann algebra setting, they can be defined in terms of certain $L_p$ norms.
These weighted $L_p$ spaces were defined by \cite{AM} relative to a fixed cyclic and separating vector $|\psi\rangle \in \H$ in the a natural cone of a standard representation of a von Neumann algebra $\cM$. 

For $1 \le p \le 2$, $L_p(\cM, \psi)$ is defined as the completion of $\sH$ with respect to the following norm:
\ben\label{eq:pnorm}
\|\zeta\|_{p,\psi} = \inf \{  \|\Delta_{\phi, \psi}^{(1/2)-(1/p)}\zeta\| : \|\phi\|=1, \pi^\cM(\phi) \ge \pi^\cM(\psi)=1 \}.
\een
 The generalization to non-faithful state functionals 
$\omega_\psi$, whose representing vector $|\psi \rangle$ is not separating, is given in\footnote{A related approach to non-commutative $L_p$-norms is 
\cite{JencovaLp1}.} \cite{Berta2}, modulo certain technical details related to the H\" older inequality. This has been proven in 
the separating case by \cite{AM} and connects the above norms to those for index $p \ge 2$. In this paper, we restrict 
to the range $1\le p \le 2$, however. 

\begin{definition}
Let $\cM$ be a von Neumann algebra in standard form acting on $\cH$. 
The ``sandwiched Renyi divergences'' \cite{mueller} $D_s, s \in (1/2,1) \cup (1,\infty)$, are defined by 
\ben
D_s(\omega_\zeta | \omega_\psi) = (s-1)^{-1} \, \log \| \zeta \|_{2s,\psi,\cM'}^{2s}
\een
 with norm taken relative to $\cM'$.
 \end{definition}
 
 The sandwiched Renyi divergences extrapolate the relative entropy which can be recovered as the 
 limit $s \to 1^-$. At the other end, for $s \to 1/2^+$, one recovers the log fidelity. In fact, 
 the $L_1$ norm relative to $\cM$ is related to the fidelity \cite{UhlmannFidelity,alberti} relative to $\cM'$ by
\ben
\label{eq:Fchar}
\| \zeta \|_{1,\psi,\cM} = \sup\{ |\langle \zeta | a \psi \rangle | : a \in \cM, \|a\|=1 \} = F_{\cM'}(\omega_\zeta, \omega_\psi),
\een
see \cite{paperI}, lem. 3 (1), which generalizes  \cite{AM}, lem. 5.3 when $\psi$ is not necessarily faithful. 
 
 $D_s$ has an operational meaning in terms of hypothesis testing, see \cite{om}. For density matrices $\omega_\zeta, \omega_\psi$ (corresponding in the case of type I factors to state functionals via $\omega_\psi(a)=\tr(a\omega_\psi)$ etc.), the 
 definition gives
 \ben\label{eq:typeI}
D_s(\omega_\zeta | \omega_\psi) = (s-1)^{-1} \, \log \tr (\omega_\psi^{(1-s)/(2s)} \omega_\zeta \omega_\psi^{(1-s)/(2s)})^s.
 \een
Returning to the case of general von Neumann algebras, we recall that $D_s \le S$ by \cite{Berta2}, prop. 4. Cor. \ref{cor:3}
therefore implies
\begin{corollary}\label{cor:5}
For a finite index inclusion $\cN \subset \cM$ with minimal conditional expectation $E:\cM \to \cN$:
\ben\label{cert1}
D_{s}^\cM(\omega_\psi | \omega_\psi \circ E) + D_{s}^{\cN'}(\omega_\psi' | \omega_\psi' \circ E') \le \log [\cM:\cN].
\een
\end{corollary}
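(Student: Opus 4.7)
The plan is to derive this inequality as an immediate consequence of Corollary \ref{cor:3} combined with the bound $D_s \le S$ that the paper just recalled from \cite{Berta2}, prop. 4. Since the relative entropy satisfies the exact \emph{equality}
\[
S_{\cM}(\omega_\psi | \omega_\psi \circ E) + S_{\cN'}(\omega_\psi' | \omega_\psi' \circ E') = \log [\cM:\cN]
\]
from Corollary \ref{cor:3}, any term-wise upper bound on the left-hand side propagates to an upper bound on $\log[\cM:\cN]$.

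Concretely, I would first note that $\cN'$ is itself a von Neumann algebra acting in standard form on $\H$ (as $\cM$ is a factor, $\cN'$ is a well-defined von Neumann algebra with its own modular structure, natural cone, and $L_p$ spaces). Hence the bound $D_s \le S$, stated for arbitrary von Neumann algebras, applies to both the $\cM$-term and the $\cN'$-term. This yields
\[
D_s^{\cM}(\omega_\psi | \omega_\psi \circ E) \le S_{\cM}(\omega_\psi | \omega_\psi \circ E), \qquad
D_s^{\cN'}(\omega_\psi' | \omega_\psi' \circ E') \le S_{\cN'}(\omega_\psi' | \omega_\psi' \circ E').
\]
Summing and invoking the equality from Corollary \ref{cor:3} directly gives \eqref{cert1}.

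There is essentially no obstacle beyond verifying that the hypotheses of the two input results line up: Corollary \ref{cor:3} requires $\cM$ to be a factor, $[\cM:\cN]<\infty$, and $E$ minimal (with $E'$ the dual minimal expectation), which are exactly the hypotheses of the present Corollary; and the $D_s \le S$ inequality requires both state functionals in each pair to be normal (which they are, since $E, E'$ are normal and $\omega_\psi, \omega_\psi'$ are vector states). The only mild subtlety to flag is the admissible range of $s$: the monotonicity $D_s \le S$ holds for $s \in (1/2, 1)$, so implicitly the inequality \eqref{cert1} is to be read in this range, with the case $s \to 1^-$ recovering the equality of Corollary \ref{cor:3}.
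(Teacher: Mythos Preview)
Your proposal is correct and matches the paper's own argument exactly: the paper simply states that $D_s \le S$ (by \cite{Berta2}, prop.~4) and then invokes Corollary~\ref{cor:3}, which is precisely what you do. Your remark about the implicit restriction to $s \in (1/2,1)$ is a fair observation that the paper leaves unstated.
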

A noteworthy special case arises for $s=1/2$:
\ben
F_\cM(\omega_\psi | \omega_\psi \circ E) \cdot F_{\cN'}(\omega_\psi' | \omega_\psi' \circ E') \ge \frac{1}{\sqrt{[\cM:\cN]}}.
\een
There are also evident error correcting code formulations of this analogous to cor. \ref{cor:ecc}.
 
\section{Variational formulas}

Here we point out a variational formula related to the $L_p$ norms in the range $p \in (1,2)$
similar to Kosaki's 
formula \cite{Kosaki} for the relative entropy. First we assume $|\zeta\rangle$ to be separating for $\cM$, 
hence cyclic for $\cM'$. Similarly as in \cite{Petz1993}, lem. 5.9, we can first argue that
\ben\label{eq:quadint}
\langle \Delta^{-1}(\Delta^{-1}+t)^{-1} \zeta | \zeta \rangle
= \inf\{
\|x \zeta\|^2 + t^{-1} \|\Delta^{-1/2} y \zeta\|^2 : x,y \in \cM', x+y=1
\},
\een
with $\Delta^{-1}=\Delta_{\phi, \psi;\cM}^{-1}=\Delta_{\psi,\phi;\cM'}$ and $t>0$, noting that $y|\zeta\rangle \in {\mathscr D}(\Delta^{-1/2})$
when $y \in \cM'$. Then combining the well-known formula
\ben\label{eq:lambdaa}
\lambda^\alpha = \frac{\sin(\pi \alpha)}{\pi} \int_0^\infty \frac{\lambda }{t+\lambda} t^{\alpha-1} \dd t 
\een
when $\lambda>0, \alpha \in (0,1)$, with  \cite{Petz1993}, prop. 5.10 gives 
\ben
\| \Delta_{\phi,\psi}^{-\alpha/2} \zeta \|^2
=
\frac{\sin(\pi \alpha)}{\pi}
\inf_{x:\RR_+ \to \cM'} \int_0^\infty [\|x(t) \zeta\|^2 + t^{-1} \| \Delta_{\phi,\psi}^{-1/2} y(t) \zeta \|^2] t^{\alpha-1} \dd t, 
\een
where the infimum is taken over all step functions $x:[0,\infty] \to \cM'$ with finite range and $x(t)=1$ for sufficiently small $t>0$
and $x(t)=0$ for sufficiently large $t$, and $y(t)=1-x(t)$. Now taking the infimum as in the definition of the $L_p$ norm and using the 
definition of the $L_1$-norm yields for $\alpha=2/p-1 \in (0,1)$:
\ben\label{eq:interp1}
\| \zeta \|^2_{p,\psi,\cM}
\ge
-\frac{\sin(2\pi/p)}{\pi}
\inf_{x:\RR_+ \to \cM'} \int_0^\infty [\|x(t) \zeta\|^2 + t^{-1} \|  y(t) \zeta \|^2_{1,\psi,\cM}] t^{2/p-2} \dd t.
\een
The $L_1$ norm relative to $\cM$ is related to the fidelity \cite{UhlmannFidelity,alberti} relative to $\cM'$ by \eqref{eq:Fchar}.
Exchanging the roles of $\cM$ and $\cM'$ then gives:
\begin{proposition}\label{prop:var}
If $1 < p < 2$, and $|\zeta\rangle$ is cyclic for $\cM$, we have the variational formula
\ben
\label{eq:cst}
\| \zeta \|_{p,\psi,\cM'}^2
\ge 
c_p
\inf_{x:\RR_+ \to \cM} \int_0^\infty [\omega_\zeta(x(t)^*x(t))  + t^{-1} F_\cM(y(t) \omega_\zeta y(t)^* | \omega_\psi)^2] t^{-2/p'} \dd t , 
\een
for the $L_p$-norm relative to $\cM',\psi$, 
where $F_\cM$ is the fidelity, 
\ben
c_p=-\frac{\sin(2\pi/p)}{\pi}>0, \quad \frac{1}{p} + \frac{1}{p'}=1, 
\een
$y(t)=1-x(t)$, $x: \RR_+ \to \cM$ a step function as described, 
and where we use the notation $(x\omega x^*)(b)=\omega(x^* a x)$. 
\end{proposition}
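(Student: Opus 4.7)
My plan is to obtain the inequality by reorganising the chain of identities already outlined in the text, with care about (i) exchanging the roles of $\cM$ and $\cM'$ in \eqref{eq:interp1}, and (ii) rewriting the two terms on the right-hand side in the state-functional form demanded by \eqref{eq:cst}. First I would establish the intermediate inequality \eqref{eq:interp1} itself. The starting point is the elementary variational identity \eqref{eq:quadint}, which represents $\langle\Delta^{-1}(\Delta^{-1}+t)^{-1}\zeta|\zeta\rangle$ as an infimum over decompositions $1=x+y$ with $x,y\in\cM'$; here one uses that $|\zeta\rangle$ is separating for $\cM$ so that $y|\zeta\rangle\in\mathscr{D}(\Delta^{-1/2})$ for each $y\in\cM'$. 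Combining with the Mellin representation \eqref{eq:lambdaa} applied to $\lambda=\Delta_{\phi,\psi;\cM}^{-1}$ and $\alpha=2/p-1\in(0,1)$, and bringing the step-function supremum inside via \cite{Petz1993} prop.~5.10, yields an equality expressing $\|\Delta_{\phi,\psi;\cM}^{-\alpha/2}\zeta\|^2$ as $\sin(\pi\alpha)/\pi$ times the infimum over $x:\RR_+\to\cM'$ of the integral in \eqref{eq:interp1}, but with $\|\Delta_{\phi,\psi;\cM}^{-1/2}y(t)\zeta\|^2$ in place of $\|y(t)\zeta\|_{1,\psi,\cM}^2$.

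Next, I would pass to the desired form by exchanging infima. For each fixed step function $x$, the integrand is pointwise bounded below by the expression obtained after replacing $\|\Delta_{\phi,\psi;\cM}^{-1/2}y(t)\zeta\|^2$ by $\|y(t)\zeta\|_{1,\psi,\cM}^2=\inf_\phi\|\Delta_{\phi,\psi;\cM}^{-1/2}y(t)\zeta\|^2$. Taking the infimum over the reference vector $\phi$ on the left-hand side, and appealing to the definition \eqref{eq:pnorm} of the $L_p$-norm, produces \eqref{eq:interp1}. This is precisely the step where equality drops to inequality, because the infimum over $\phi$ is pulled inside the integral separately at each $t$, losing the coupling between the optimising $\phi$'s across $t$. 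I expect this is the only genuinely substantive point in the argument and the place to be most careful, particularly in verifying that the class of admissible step functions used in \cite{Petz1993}, prop.~5.10 covers the class appearing in \eqref{eq:cst}.

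Finally, to conclude the proposition itself, I would exchange the roles of $\cM$ and $\cM'$ in \eqref{eq:interp1}, which is legitimate because by hypothesis $|\zeta\rangle$ is cyclic for $\cM$, hence separating for $\cM'$. This converts the right-hand side into an infimum over step functions $x:\RR_+\to\cM$ with the second summand involving $\|y(t)\zeta\|_{1,\psi,\cM'}^2$. I then rewrite $\|x(t)\zeta\|^2=\omega_\zeta(x(t)^*x(t))$ and, by \eqref{eq:Fchar} applied with the roles of $\cM$ and $\cM'$ swapped, identify $\|y(t)\zeta\|_{1,\psi,\cM'}$ with $F_\cM(\omega_{y(t)\zeta}|\omega_\psi)=F_\cM(y(t)\omega_\zeta y(t)^*|\omega_\psi)$ in the sesquilinear notation of the statement. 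A final check that $\alpha-1=2/p-2=-2/p'$ produces the exponent $t^{-2/p'}$, while the trigonometric identity $\sin(\pi\alpha)=\sin(2\pi/p-\pi)=-\sin(2\pi/p)$ supplies the stated constant $c_p=-\sin(2\pi/p)/\pi$ and shows $c_p>0$ on $(1,2)$ since then $2\pi/p\in(\pi,2\pi)$.
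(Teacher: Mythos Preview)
Your proposal is correct and follows essentially the same route as the paper: derive the quadratic variational identity \eqref{eq:quadint}, integrate against $t^{\alpha-1}$ via \eqref{eq:lambdaa} and \cite{Petz1993}, prop.~5.10, then take the infimum over $\phi$ (noting this is where equality becomes inequality), swap $\cM$ and $\cM'$, and identify the $L_1$-term with the fidelity via \eqref{eq:Fchar}. One trivial slip: in your second paragraph you speak of ``bringing the step-function supremum inside'', but it is an infimum here.
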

Note that all terms on the right side of \eqref{eq:cst} manifestly only depend on the functionals 
$\omega_\zeta, \omega_\psi$ on $\cM$ and not their vector representatives $|\zeta\rangle,|\psi\rangle$. Hence, they
can be defined intrinsically on a $C^*$-algebra as well -- for the fidelity 
this follows from another variational formula \cite{UhlmannFidelity,alberti}.
The proposition might 
hence be a possible starting point of an investigation in the context of $C^*$-algebras. 

Note also that we may always go to the GNS-representation of for $\cM$ in the state $\omega_\zeta$, 
in which the state representer is automatically cyclic for $\cM$, so this assumption may in fact be dropped from the proposition\footnote{
If we go to the GNS-representation of $\omega_\zeta$, $\cM$ may no longer be presented in standard form, so we must 
use the Connes spatial derivative to define the $L_p$-norms as in \cite{Berta2}.}. 

We will now start to investigate the variational formula in its own right. For convenience, we make the following definition ($p=2s$).

\begin{definition}
Let $\cM$ be a von Neumann algebra in standard form acting on $\cH$, $s\in (1/2,1)$. 
The ``generalized fidelity'' is defined by 
\ben
\Phi_s(\omega_\zeta | \omega_\psi) =  \, \inf_{x:\RR_+ \to \cM} \log
\left\{
c_{2s} \int_0^\infty [\omega_\zeta(x(t)^*x(t))  + t^{-1} F(y(t) \omega_\zeta y(t)^* | \omega_\psi)^2] t^{\frac{s-1}{s}} \frac{\dd t}{t}
\right\}^{\frac{s}{s-1}}
\een
 with the infimum and notations as defined in prop. \ref{prop:var}.
 \end{definition}

{\bf Remarks:}
1) The normalizations of $\Phi_s$ are chosen in such a way that 
\ben
\Phi_s \ge D_s 
\een
by prop. \ref{prop:var}.

2) The terminonlogy ``generalized fidelity'' is due to the following observation. Consider $\cM = M_n$ and 
diagonal (normalized) density matrices $\omega_\zeta = diag(p_1, \dots, p_n), \omega_\psi = diag(q_1, \dots, q_n)$. 
We use the abbreviation $F=F(\omega_\zeta | \omega_\psi) = \sum_i \sqrt{p_iq_i}$ for the fidelity. By considering the 
variational expression in the definition of $\Phi_s$ with diagonal $x(t) = diag(x_1(t), \dots, x_n(t))$, one can easily convince oneself that the 
infimum can be reached by approximations of 
\ben
x_i(t) = \sqrt{\frac{q_i}{p_i}} \frac{F}{t+1}
\een
by step functions. Inserting this into the variational formula one gets $\Phi_s \ge -\frac{s}{1-s}\log F^2$. We will see below that an 
inequality of this type with a worse constant is true generally. On the other hand, 
as we will also see below, we always have the reverse inequality which implies that $\Phi_s =-\frac{s}{1-s}\log F^2$ in the present case. This 
becomes (minus log of) the squared fidelity when $s=1/2$. 

3) The properties shown below indicate that $\Phi_s$ has most of the desired properties of a divergence. To the best of our knowledge $\Phi_s$ 
is a new generalization of the log fidelity. 

\medskip 
\noindent
We now investigate some properties of $\Phi_s$. First, consider $|\zeta_1\rangle, |\zeta_2\rangle$ 
such that $\omega_{\zeta_1} \le \omega_{\zeta_2}$ in the sense of functionals on the von Neumann algebra $\cM$. 
It is well-known that such a condition 
implies the existence of $a' \in \cM'$ such that $|\zeta_1\rangle = a'|\zeta_2\rangle$ and $\|a'\| \le 1$. Then, 
\eqref{eq:Fchar} immediately gives: 
\ben
\begin{split}
F_\cM(y\omega_{\zeta_1}y^*, \omega_\psi) &= \sup\{
|\langle y \zeta_1 | b' \psi \rangle : b' \in \cM', \| b' \| =1\}\\
&= \sup\{
|\langle y a' \zeta_2 | b' \psi \rangle : b' \in \cM', \| b' \| =1\}\\
&= \sup\{
|\langle y \zeta_2 | a^{\prime *} b' \psi \rangle : b' \in \cM', \| b' \| =1\}\\
&\le \sup\{
|\langle y \zeta_2 | c' \psi \rangle : c' \in \cM', \| c' \| =1\}\\
&= F_\cM(y\omega_{\zeta_2}y^*, \omega_\psi)
\end{split}
\een
for any $y \in \cM$, since $\|a^{\prime *} b'\| \le 1$ so the sup in the fourth line is over a larger set.  
But then the variational formula also gives without difficulty $\| \zeta_1 \|_{p,\psi} \le \| \zeta_2 \|_{p,\psi}$.
This is consistent with the formula \eqref{eq:typeI} in the type I setting because the function $x \mapsto x^s$ is 
operator monotone for $s \in [0,1]$.
Similarly, consider $|\psi_1\rangle, |\psi_2\rangle$ 
such that $\omega_{\psi_1} \le \omega_{\psi_2}$. By the same argument $F(y\omega_{\zeta}y^*, \omega_{\psi_1}) \le F(y\omega_{\zeta}y^*, \omega_{\psi_2})$, and the variational formula thereby gives $\| \zeta \|_{p,\psi_1} \le \| \zeta \|_{p,\psi_2}$. 
In conclusion, we get:

\begin{corollary}
For normal positive functionals on a von Neumann algebra $\omega_{\zeta_1} \le \omega_{\zeta_2}$ and 
$\omega_{\psi_1} \le \omega_{\psi_2}$ we have 
also $\Phi_s(\omega_{\zeta_1} | \omega_{\psi_1}) \ge \Phi_s(\omega_{\zeta_2} | \omega_{\psi_2})$ when $1 > s > 1/2$.
\end{corollary}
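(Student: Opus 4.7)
The plan is to reduce the corollary directly to the monotonicity of the fidelity that has been established in the paragraph immediately preceding the statement, observing that the variational definition of $\Phi_s$ is composed entirely of operations that preserve order, modulo one sign flip at the end.

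First, I would fix an arbitrary step function $x:\RR_+ \to \cM$ with $y(t)=1-x(t)$, and look at the integrand
\[
I_x(t;\zeta,\psi) := \omega_\zeta(x(t)^*x(t)) + t^{-1} F_\cM(y(t)\omega_\zeta y(t)^*\,|\,\omega_\psi)^2
\]
appearing in the definition of $\Phi_s$. The first term is clearly monotone non-decreasing in $\omega_\zeta$ as a functional on $\cM$, and independent of $\omega_\psi$. For the second term, the preceding paragraph already shows, via the Alberti--Uhlmann characterization $F_\cM(\omega|\omega_\psi)=\sup\{|\langle\cdot\,|\,a\psi\rangle|:a\in\cM',\,\|a\|=1\}$ together with the fact that $\omega_{\zeta_1}\le\omega_{\zeta_2}$ implies $|\zeta_1\rangle = a'|\zeta_2\rangle$ for some $a'\in\cM'$ with $\|a'\|\le 1$, that $F_\cM(y\omega_{\zeta}y^*|\omega_\psi)$ is monotone non-decreasing separately in $\omega_\zeta$ and in $\omega_\psi$. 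Combining these facts yields joint monotonicity of $I_x(t;\cdot,\cdot)$ in $(\omega_\zeta,\omega_\psi)$, pointwise in $t$.

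Second, the weight $c_{2s}\, t^{(s-1)/s}\, dt/t$ is a positive measure on $\RR_+$ because $c_{2s}=-\sin(2\pi/(2s))/\pi>0$ for $s\in(1/2,1)$, so integrating preserves the inequality $I_x(t;\zeta_1,\psi_1)\le I_x(t;\zeta_2,\psi_2)$. Taking the infimum over the common class of step functions $x$ preserves it as well, since we are comparing two non-negative families on the same index set pointwise.

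Third, I apply $\log$ (monotone increasing) and the exponent $s/(s-1)$. Since $s\in(1/2,1)$, this exponent is strictly negative, and this is the only place where the direction of the inequality flips; it turns the non-decreasing character of the integral in $(\zeta,\psi)$ into the non-increasing character of $\Phi_s$, giving $\Phi_s(\omega_{\zeta_1}|\omega_{\psi_1})\ge \Phi_s(\omega_{\zeta_2}|\omega_{\psi_2})$. There is essentially no obstacle: the monotonicity of $F_\cM$ in each slot has been done, and every subsequent operation (sum of non-negative terms, integration against a positive measure, infimum, logarithm, multiplication by a negative constant) is tracked straightforwardly. The only subtle point worth flagging in the write-up is ensuring that the sign of $c_{2s}$ and of $s/(s-1)$ are handled consistently, so that the single sign flip in the last step produces the stated inequality rather than its reverse.
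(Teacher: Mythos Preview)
Your proof is correct and follows the same approach as the paper: establish monotonicity of $F_\cM(y\omega_\zeta y^*|\omega_\psi)$ in each argument via the sup-characterization \eqref{eq:Fchar} and the contraction $a'\in\cM'$, then track the pointwise inequality through the variational definition of $\Phi_s$, with the single sign flip occurring at the negative exponent $s/(s-1)$. Your write-up is somewhat more explicit than the paper's (which simply says ``the variational formula gives without difficulty\dots''), but the argument is identical in substance.
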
 

As an application, consider a von Neumann subalgebra $\cN \subset \cM$ together with a conditional expectation 
$E: \cM \to \cN$ and unit vector $|\zeta\rangle$ such that $ind(E)=\lambda<\infty$. Then by definition
$\omega_\zeta \circ E \ge \lambda^{-1} \omega_\zeta$.
The identity $\Phi_s(\omega_\zeta | \lambda^{-1} \omega_\psi) = \Phi_s(\omega_\zeta | \omega_\psi) + \log \lambda$ [cor. \ref{cor:6},3)] and the corollary 
trivially give
\ben\label{Dsbound}
\Phi_s(\omega_\zeta| \omega_\zeta \circ E) \le \log \lambda
\een
because $\Phi_s(\omega_\psi | \omega_\psi)=0$.

As another application of prop. \ref{prop:var}, we can prove the DPI for $\Phi_s$ in the context of properly infinite von Neumann algebras using only properties of the fidelity in the range $1/2 \le s \le 1$, 
without the use of any complex interpolation arguments or modular operators as in \cite{Berta2} or in \cite{frank} in the context of $D_s$. 

\begin{corollary}\label{thm:sDPI}
Let $\cM, \cN$ be properly infinite von Neumann algebras and $T: \cM \to \cN$ a channel. 
Then for two normal state functionals $\omega_\zeta, \omega_\psi$ we have $\Phi_s(\omega_\zeta \circ T| \omega_\psi \circ T)
\le \Phi_s(\omega_\zeta | \omega_\psi)$ for $s \in (1/2,1)$.
\end{corollary}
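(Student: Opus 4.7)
The plan is to use the variational formula in prop. \ref{prop:var}: transform any admissible step function $\tilde x:\RR_+\to\cM$ in the variational expression for $\Phi_s^\cM(\omega_\zeta\circ T\,|\,\omega_\psi\circ T)$ via $x:=T\circ\tilde x:\RR_+\to\cN$, and establish the pointwise integrand bound
\[
\omega_\zeta(x^*x) + t^{-1}F_\cN(y\omega_\zeta y^*\,|\,\omega_\psi)^2 \ \le\ (\omega_\zeta\circ T)(\tilde x^*\tilde x) + t^{-1}F_\cM(\tilde y(\omega_\zeta\circ T)\tilde y^*\,|\,\omega_\psi\circ T)^2,
\]
with $y=T(\tilde y)$ and $\tilde y=1-\tilde x$. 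The first summand is controlled by the Kadison--Schwarz inequality for the UCP map $T$: $T(\tilde x)^*T(\tilde x)\le T(\tilde x^*\tilde x)$, so $\omega_\zeta(x^*x)\le(\omega_\zeta\circ T)(\tilde x^*\tilde x)$.

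The fidelity summand is the heart of the argument and is where the proper-infiniteness hypothesis enters. I would invoke the Stinespring-type dilation available for properly infinite von Neumann algebras: $T(\tilde m)=V^*\pi(\tilde m)V$ with $\pi:\cM\to\cN$ a normal $*$-homomorphism and $V\in\cN$ an isometry \emph{both inside $\cN$} (this is exactly what the properly infinite hypothesis buys). The $L_1$-type characterization \eqref{eq:Fchar} $F_\cN(\omega_\xi\,|\,\omega_\eta)=\sup\{|\langle\xi\,|\,b'\eta\rangle|:b'\in\cN',\,\|b'\|\le1\}$, combined with the commutation $Vb'=b'V$ for every $b'\in\cN'$, yields the identity
\[
F_\cN(y\omega_\zeta y^*\,|\,\omega_\psi)=F_\cN(\omega_{\pi(\tilde y)V\zeta}\,|\,\omega_{V\psi}).
\]
Applying the ordinary DPI of the fidelity to the restriction (unital CP) map $\cN\to\pi(\cM)$, and using the $*$-isomorphism $\pi:\cM\cong\pi(\cM)$, the right-hand side is bounded by $F_\cM(\tilde y(\omega_\zeta\circ T)\tilde y^*\,|\,\omega_\psi\circ T)$.

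Combining the two bounds, integrating in $t$, and taking an infimum over $\tilde x$ then yields the required inequality between the variational integrals appearing in the definitions of $\Phi_s^\cM(\omega_\zeta\circ T\,|\,\omega_\psi\circ T)$ and $\Phi_s^\cN(\omega_\zeta\,|\,\omega_\psi)$. After passing through the monotone transformation $A\mapsto\log(c_{2s}A)^{s/(s-1)}$, which is decreasing since $s/(s-1)<0$ for $s\in(1/2,1)$, this gives $\Phi_s^\cM(\omega_\zeta\circ T\,|\,\omega_\psi\circ T)\le\Phi_s^\cN(\omega_\zeta\,|\,\omega_\psi)$. The main obstacle is securing the Stinespring dilation with both $V$ and $\pi$ landing inside $\cN$: for general von Neumann algebras one only has a dilation into some ambient $B(\sH)$, and the commutation $Vb'=b'V$ that underlies the key fidelity identity above would fail.
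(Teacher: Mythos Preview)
Your proof is correct and follows essentially the same route as the paper's: both invoke the Stinespring dilation $T=V^*\pi(\cdot)V$ with $V\in\cN$ and $\pi:\cM\to\cN$ available for properly infinite algebras (\cite{landauer}, thm.~2.10), then control the fidelity summand via the characterization \eqref{eq:Fchar}, the commutation of $V$ with $\cN'$, and monotonicity of $F$ under restriction to $\pi(\cM)\subset\cN$. The only difference is organizational: the paper treats the isometric compression $a\mapsto V^*aV$ and the homomorphism $\pi$ as two separate DPI steps, whereas you handle the full $T$ in one pass by invoking Kadison--Schwarz for the $\omega_\zeta(x^*x)$ term and the dilation only for the fidelity term.
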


\begin{proof}
By \cite{landauer}, thm. 2.10, $T$ can be written in Stinespring form $T(b)=v^*\rho(b)v$, where $v \in \cM, v^* v=1, vv^*=q$ 
($q$ a projection) and $\rho:\cN \to \cM$ 
a homomorphism of von Neumann algebras. Then, it is sufficient to prove the theorem separately for the case (i) $T_1(a) = v^*av$ and 
the case (ii) $T_2(b)=\rho(b)$. 

(i) Using \eqref{eq:Fchar} with $\cM'$ in place of $\cM$, 
we have for $y \in \cM$:
\ben
\begin{split}
F_{\cM}(y\omega_{v\zeta}y^* | \omega_{v\psi})  & = \sup \{ 
|\langle yv\zeta | x'v\psi \rangle | : \| x' \|=1, x' \in \cM' 
\}\\
& = \sup \{ 
|\langle yv\zeta | vx'\psi \rangle | : \| x' \|=1, x' \in \cM' 
\}\\
& = \sup \{ 
|\langle v^*yv\zeta | x'\psi \rangle | : \| x' \|=1, x' \in \cM' 
\} \\
& = F_{\cM}( (v^*yv) \omega_{\zeta} (v^*yv)^* | \omega_\psi).
\end{split}
\een
Furthermore, 
\ben
\omega_{v\zeta}( x^* x) = \omega_\zeta(v^* x^* x v) \ge \omega_\zeta((v^*xv)^* v^*xv) .
\een
Then we have in view of prop. \ref{prop:var} ($p=2s$)
\ben
\begin{split}
& c_p \inf_{x: \RR_+ \to \cM} \int_0^\infty [\omega_{v\zeta}( x(t)^* x(t))   + t^{-1} F_\cM( y(t) \omega_{v\zeta} y(t)^* | \omega_{v\psi})^2] t^{-2/p'} \dd t \\
= & c_p \inf_{x: \RR_+ \to \cM} \int_0^\infty [\omega_{v\zeta}( x(t)^* x(t))   + t^{-1} F_\cM( (v^* y(t) v) \omega_{\zeta} (v^* y(t) v)^* | \omega_{\psi})^2] t^{-2/p'} \dd t \\
\ge & c_p
\inf_{x:\RR_+ \to \cM} \int_0^\infty [\omega_\zeta(X(t)^* X(t)) + t^{-1} F_{\cM}( Y(t) \omega_{\zeta} Y(t)^* | \omega_\psi)^2 ] t^{-2/p'} \dd t .
\end{split}
\een
Note that $Y(t)=v^*y(t) v, X(t)=v^* x(t) v$ are particular examples of piecewise constant functions valued in $\cM$ with finite range such that 
$X(t)+Y(t)=1$ and such that $Y(t)=0$ for sufficiently small $t$ and $X(t)=0$ for sufficiently large $t$. Thus, we can make the right side at most smaller by 
taking the infimum over {\em all} such functions. This results in $\Phi_s(\omega_{v\zeta}| \omega_{v\psi}) \le \Phi_s(\omega_{\zeta}| \omega_{\psi})$ using 
the definition of $\Phi_s$.

(ii) We have ($p=2s$)
\ben
\begin{split}
& c_p
\inf_{x:\RR_+ \to \rho(\cN)} \int_0^\infty [ 
\omega_\zeta(x(t)^* x(t)) + t^{-1}
F_{\rho(\cN)}(y(t)\omega_\zeta  y(t)^* | \omega_\psi)^2
] t^{-2/p'} \dd t \\
\ge &  \ c_p
\inf_{X:\RR_+ \to \cM} \int_0^\infty [ 
\omega_\zeta(X(t)^* X(t)) + t^{-1}
F_{\rho(\cN)}(Y(t) \omega_\zeta  Y(t)^* | \omega_\psi)^2
] t^{-2/p'} \dd t \\
\ge & \ c_p
\inf_{X:\RR_+ \to \cM} \int_0^\infty [ 
\omega_\zeta(X(t)^* X(t)) + t^{-1}
F_{\cM}(Y(t) \omega_\zeta  Y(t)^* | \omega_\psi)^2
] t^{-2/p'} \dd t,
\end{split}
\een
where in the first step we took the infimum over the larger set of piecewise constant functions $X$ valued in $\cM$ with finite range
such that $1-X(t)=Y(t)=0$ for sufficiently small $t$ and $X(t)=0$ for sufficiently large $t$. In the second step, we used the monotonicity  
$F_{\rho(\cN)} \ge F_\cM$ since $\rho(\cN)$ is a von Neumann subalgebra of $\cM$, by \eqref{eq:Fchar}. This 
yields $\Phi_s(\omega_{\zeta} \circ \rho| \omega_{\psi} \circ \rho) \le \Phi_s(\omega_{\zeta}| \omega_{\psi})$.
\end{proof}

Applying the DPI to the channel 
$ \cA \to \cA \oplus \cdots \oplus \cA, a \mapsto a \oplus \cdots \oplus a$ and the 
states $\rho = \oplus_i \lambda_i \omega_{\psi_i}, \sigma = \oplus_i \lambda_i \omega_{\zeta_i}$ implies that $\Phi_s$ is 
jointly convex by a standard argument, see e.g. \cite{mueller}, proof of prop. 1, 
\ben
\sum_i \lambda_i \Phi_s(\omega_{\zeta_i} | \omega_{\psi_i}) \ge \Phi_s(\sum_i \lambda_i \omega_{\zeta_i} | \sum_j \lambda_j \omega_{\psi_j})
\een
where the sum is finite and $\lambda_i \ge 0, \sum \lambda_i=1$.  Next, we obtain the following corollary:

\begin{corollary}\label{cor:6}
Let $\cM$ be a von Neumann algebra and $s \in (1/2,1)$.

1) We have for $\|\zeta\|=1$ 
\ben
\label{eq:interp3}
\Phi_s(\omega_\zeta | \omega_\psi) \ge  -\log F(\omega_\zeta | \omega_\psi)^2.  
\een

2) We have for $\|\psi\|=1$
\ben
\label{eq:interp3b}
\Phi_s(\omega_\zeta | \omega_\psi) \le  -\frac{s}{1-s} \log F(\omega_\zeta | \omega_\psi)^2.  
\een

3) $\Phi_s(\omega_\zeta | \lambda \omega_\psi) = \Phi_s(\omega_\zeta | \omega_\psi) - \log \lambda$ for $\lambda > 0$.

4) We have for $\|\psi\|=1=\|\zeta\|$ that $\lim_{s \to (1/2)^+} \Phi_{s} (\omega_\zeta | \omega_\psi)  =  -\log F(\omega_\zeta | \omega_\psi)^2$.

5) $\Phi_s(\omega_\zeta | \omega_\psi) \ge 0$ for $\|\psi\|=1=\|\zeta\|$ with equality iff  $\omega_\zeta = \omega_\psi$.
\end{corollary}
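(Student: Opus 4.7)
Throughout, write $F := F_\cM(\omega_\zeta|\omega_\psi)$ and $\alpha := (1-s)/s \in (0,1)$, and denote by $I(x)$ the integral inside the braces in the definition of $\Phi_s$. I will use the identity $c_{2s}\,\pi/\sin(\pi\alpha) = 1$ and the classical integral $\int_0^\infty \lambda/(t+\lambda)\,t^{\alpha-1}\,dt = \lambda^\alpha\pi/\sin(\pi\alpha)$ from \eqref{eq:lambdaa}. Since the outer exponent $s/(s-1)$ is negative for $s\in(1/2,1)$, bounds on $I$ flip direction when transported to bounds on $\Phi_s$; this must be kept in mind throughout. The plan is to obtain (1) and (2) from well-chosen trials and a universal inequality, and to deduce (3)--(5) as easy consequences.

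For part (1), I insert the trial $x_0(t) = (F^2/(t+F^2))\,1_{\cM}$, approximated by step functions taking the value $1$ near $t=0$ and $0$ near $t=\infty$. Using $F(y_0\omega_\zeta y_0^*|\omega_\psi) = (t/(t+F^2))F$, the integrand collapses to $F^2/(t+F^2)$, and the $\lambda^\alpha$-formula with $\lambda=F^2$ gives $c_{2s}\,I(x_0) = F^{2\alpha}$. Since $\inf_x I(x)\le I(x_0)$, the reversed log inequality yields $\Phi_s \ge \tfrac{s}{s-1}\log F^{2\alpha} = -\log F^2$, where I used $\tfrac{s}{s-1}\alpha = -1$.

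For part (2), I prove the universal lower bound $I(x)\ge F^2/c_{2s}$ valid for every admissible step function $x$. Splitting $\zeta = x\zeta + y\zeta$ in the dual form $F_\cM(\omega_\eta|\omega_\psi) = \sup\{|\langle\eta|b\psi\rangle|: b\in\cM',\,\|b\|\le 1\}$ (cf.\ \eqref{eq:Fchar} with the roles of $\cM$ and $\cM'$ exchanged) and applying Cauchy--Schwarz $|\langle x\zeta|b\psi\rangle|\le \|x\zeta\|$ yields the triangle-like inequality $F \le \|x\zeta\| + F(y\omega_\zeta y^*|\omega_\psi)$. Combining this with the Young-type inequality $(a+b)^2 \le (1+t)(a^2 + b^2/t)$ applied to $a=\|x\zeta\|$, $b=F(y\omega_\zeta y^*|\omega_\psi)$, one obtains the pointwise bound
\begin{equation*}
\omega_\zeta(x^*x) + t^{-1}F(y\omega_\zeta y^*|\omega_\psi)^2 \ge \frac{F^2}{1+t}.
\end{equation*}
Integrating against $t^{\alpha-1}dt$ and invoking $\int_0^\infty t^{\alpha-1}/(1+t)\,dt = \pi/\sin(\pi\alpha)$ closes the argument; applying the negative multiplier converts this into $\Phi_s \le \tfrac{s}{s-1}\log F^2 = -\tfrac{s}{1-s}\log F^2$.

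Parts (3)--(5) then follow routinely. For (3), substituting $t = \lambda u$ in the integral together with $F(\,\cdot\,|\lambda\omega_\psi) = \sqrt\lambda\,F(\,\cdot\,|\omega_\psi)$ shows that $\inf_x I$ scales by $\lambda^\alpha$, and the identity $\tfrac{s}{s-1}\alpha = -1$ delivers the claim. For (4), $\lim_{s\to(1/2)^+} s/(1-s) = 1$, so (1) and (2) pinch $\Phi_s \to -\log F^2$ by the squeeze theorem. For (5), $\Phi_s \ge -\log F^2 \ge 0$ since $F\le 1$ for normalized states (Cauchy--Schwarz); equality forces $F = 1$ by (1), which is equivalent to $\omega_\zeta = \omega_\psi$ by a standard property of the Uhlmann fidelity, while the converse is immediate from (2). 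The chief technical difficulty is the step-function approximation in part (1): one must verify that $I(x_n) \to \int_0^\infty F^2/(t+F^2)\,t^{\alpha-1}dt$ for a suitable sequence $x_n$ of step functions obeying the boundary constraints, which is straightforward thanks to the integrability of both tails for $\alpha\in(0,1)$. The degenerate case $F=0$ (giving $\Phi_s = +\infty$) is handled separately.
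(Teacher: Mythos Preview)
Your proof is correct. Parts 3)--5) match the paper's arguments exactly, and part 1) follows the same strategy (plug a scalar trial into the variational formula), though your choice $x_0(t)=F^2/(t+F^2)$ is actually cleaner than the paper's $x(t)=F/(t+F)$: with yours the integrand collapses exactly to $F^2/(t+F^2)$ and the $\lambda^\alpha$-integral \eqref{eq:lambdaa} applies verbatim, giving the sharp constant $-\log F^2$ on the nose.

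Part 2) is where you genuinely diverge from the paper. The paper lower-bounds $F(y\omega_\zeta y^*|\omega_\psi)^2$ by the single term $|\langle\psi|y\zeta\rangle|^2=\|P_\psi y\zeta\|^2$, then invokes the quadratic-form identity \eqref{eq:quadint} with the rank-one projector $P_\psi$ in place of $\Delta^{-1}$ to evaluate the infimum as $(t+1)^{-1}|\langle\psi|\zeta\rangle|^2$; only at the end does it recover the full fidelity by optimizing $|\zeta\rangle\to u'|\zeta\rangle$ over unitaries $u'\in\cM'$. Your route is more direct: you keep the full fidelity throughout via the triangle-type bound $F\le\|x\zeta\|+F(y\omega_\zeta y^*|\omega_\psi)$ obtained from the dual characterization \eqref{eq:Fchar}, and then the purely elementary inequality $(a+b)^2\le(1+t)(a^2+t^{-1}b^2)$ immediately yields the pointwise bound $\ge F^2/(1+t)$. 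This avoids both the spectral calculus for $P_\psi(t+P_\psi)^{-1}$ and the final optimization over $\cM'$-unitaries, and it makes the role of the normalization $\|\psi\|=1$ completely transparent (it enters only in the Cauchy--Schwarz step $|\langle x\zeta|b\psi\rangle|\le\|x\zeta\|$). The paper's approach, on the other hand, ties the argument more closely to the operator-theoretic machinery underlying \eqref{eq:quadint} and makes the connection to the $L_p$-norm origins of $\Phi_s$ more visible.
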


\begin{proof}
For 1), we choose an approximation of 
\ben
x(t) = \frac{F(\omega_\zeta | \omega_\psi)}{t +F(\omega_\zeta | \omega_\psi)} 1
\een
by step functions. Then we apply the variational definition of $\Phi_s$ and the integral formula \eqref{eq:lambdaa} upon which the result follows by a simple calculation.

For 2),  we first use the supremum characterization of the fidelity \eqref{eq:Fchar}, by which 
 have $F(y\omega_\zeta y^*,\omega_\psi)^2\ge |\langle \psi | y \zeta \rangle|^2 = \| P_\psi y \zeta\|^2$, with $P_\psi = |\psi \rangle \langle \psi |$
 a projector. Then ($p=2s$), 
 \ben
 \begin{split}
& c_p
\inf_{x:\RR_+ \to \cM} \int_0^\infty [\omega_\zeta(x(t)^*x(t))  + t^{-1} F(y(t) \omega_\zeta y(t)^* | \omega_\psi)^2] t^{-2/p'} \dd t
\\
\ge & \
c_p
\inf_{x: \RR_+ \to \cM'} \int_0^\infty [\|x(t) \zeta \|^2  + t^{-1} \| P_\psi y(t) \zeta\|^2] t^{-2/p'} \dd t \\
= & \
c_p \int_0^\infty \langle \zeta | P_\psi(t+P_\psi)^{-1}  \zeta\rangle t^{-2/p'} \dd t \\
= & \
c_p \| P_\psi \zeta\|^2  \int_0^\infty (t+1)^{-1}  t^{-2/p'} \dd t 
=  | \langle \zeta | \psi \rangle |^2.
\end{split}
\een
This remains true if we change $|\zeta\rangle \to u' |\zeta\rangle$ for any unitary $u'$ from $\cM'$, thus giving 
 \ben
 \begin{split}
& c_p  \inf_{x:\RR_+ \to \cM} \int_0^\infty [\omega_\zeta(x(t)^*x(t))  + t^{-1} F(y(t) \omega_\zeta y(t)^* | \omega_\psi)^2] t^{-2/p'} \dd t \\
& \ge \sup \{ | \langle u' \zeta | \psi \rangle |^2 : u' \in \cM' \ \ \text{unitary}\} = F(\omega_\zeta | \omega_\psi)^2, 
\end{split}
\een
using a well-known characterization \cite{alberti} of the fidelity in the last step. The rest then follows from the definition of $\Phi_s$.

For 3), we use the homogeneity of the fidelity $F(\lambda y(t)\omega_\psi y(t)^* | \omega_\zeta) = \sqrt{\lambda} F( y(t)\omega_\psi y(t)^* | \omega_\zeta)$
inside the variational formula in the definition of $\Phi_s$ 
and apply a change of variables $t' = t/\lambda$ in the integral. 

Item 4) is a combination of 1) and 2).

Item 5) follows from the properties $F(\omega_\zeta | \omega_\psi) \le 1$,  $F(\omega_\zeta | \omega_\psi)=1$ iff $\omega_\zeta = \omega_\psi$, 
and 1), 2).
\end{proof}

\section{Application to quantum field theory}
\label{sec:3}

Here we consider an application of $\Phi_s$ to quantum field theory inspired by \cite{Xu}. For simplicity and concreteness, 
we consider chiral conformal quantum field theories (CFTs) on a single lightray (real line) or equivalently the circle in the conformally compactified picture. But the arguments are of a rather general nature and would apply with some fairly obvious modifications to general quantum field theories in higher dimensions under appropriate hypotheses.

We assume standard axioms common in algebraic quantum field theory \cite{haag_2}. 
According to this axiom scheme, fulfilled by many examples, a chiral CFT is an 
assignment $I \mapsto \cA(I)$, wherein $I \subset S^1$ is an open interval and 
$\cA(I)$ a von Neumann algebra acting on a fixed Hilbert space, $\sH$. One assumes:

\begin{enumerate}
\item (Isotony)
If $I_1 \subset I_2$ then $\cA(I_1) \subset \cA(I_2)$.

\item (Commutativity)
If $I_1 \cap I_2$ is empty, then $[\cA(I_1) , \cA(I_2)]=\{0\}$.

\item (M\" obius covariance)
There is a strongly continuous unitary representation $U$ on $\sH$ of the M\" obius group $G=\widetilde{SL_2({\mathbb R})/{\mathbb Z}_2}$ which is consistent 
with the standard action of this group the circle by fractional linear transformations, 
in the sense $U(g) \cA(I) U(g)^* = \cA(gI)$ for all $g \in G$.

\item (Positive energy)
The 1-parameter subgroup of rotations has a positive generator $L_0$ under the representation $U$.

\item (Vacuum) 
There is a unique vector $|\Omega\rangle \in \cH$, called the vacuum, which is invariant under all $U(g), g \in G$.

\item (Additivity)
Let $I$ and $I_n$ be intervals such that $I = \cup_n I_n$. Then $\cA(I)=\vee_n \cA(I_n)$ (strong closure).

\end{enumerate}

The special situation we would like to study here are two chiral CFTs $\cA, \cF$ in the above sense such that 
$\cA(I) \subset \cF(I)$ is an inclusion of von Neumann algebras acting on the same Hilbert space $\sH$ 
for any interval $I$, and transforming under the same representation, $U$. A typical example is when 
$\cA$ is the Virasoro net (operator algebras generated by the stress energy tensor) and $\cF$ is an extension
of finite index as classified in \cite{Kawahigashi}. For further 
details on such a setting, see e.g. \cite{longo2,longo3}. We will also assume that the Jones-Kosaki index 
$\lambda \equiv [\cF(I):\cA(I)]$ is finite (hence independent of $I$ \cite{longo3}). 
By \cite{longo2}, lemma 13, this implies that for each $I$ there is a 
conditional expectation $E_I: \cF(I) \to \cA(I)$, satisfying the Pimsner-Popa inequality \eqref{eq:pp}. 
We assume that $E_I$ leaves the vacuum vector invariant, $\omega_\Omega \circ E_I = \omega_\Omega$ for all intervals 
$I$. Furthermore, these conditional expectations must be consistent in the sense 
$E_I |_{\cF(J)} = E_J$ for $J \subset I$. Consider two sets of intervals (identifying $S^1$ with the real line via a stereographic projection):
\ben\label{systems}
A_n = (a, -1/n), \quad B_n = (1/n, b), 
\een
wherein $n$ is a natural number. Thus, $dist(A_n,B_n) = 2/n$ and when $n \to \infty$, the intervals touch each other. 
We consider the von Neumann algebra inclusion 
$\cA(A_n) \vee \cA(B_n) \subset \cF(A_n) \vee \cF(B_n)$, and we let $E_n$ be the conditional expectation 
$\cF(A_n) \vee \cF(B_n) \to \cA(A_n) \vee \cA(B_n)$ such that 
\ben
\label{eq:Endef}
E_n(a_n b_n) = E_{A_n}(a_n) E_{B_n}(b_n) \quad \text{$\forall a_n \in \cF(A_n), b_n \in \cF(B_n)$.}
\een
Thus, $E_n$ only projects out degrees of freedom of the individual parts of the system in \eqref{systems} separately\footnote{Somewhat 
formally $E_n = E_{A_n} \otimes E_{B_n}$, which holds rigorously if the split property holds in the CFT.} In the 
limit as $n \to \infty$ (denoted as $\lim_n$ in the following), these systems touch each other. 
We can show the following 
theorem. 
\begin{theorem}
\label{thm:1}
We have $\lim_n \Phi_s(\omega_\Omega | \omega_\Omega \circ E_n) = \log [ \cF : \cA ]$ for $s \in [1/2,1)$. 
\end{theorem}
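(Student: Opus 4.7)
The strategy. The target value $\log[\cF:\cA]$ in thm. \ref{thm:1} is independent of $s\in[1/2,1)$, which strongly suggests obtaining matching upper and lower bounds. I would anchor on the endpoint $s=1/2$, where item 4 of cor. \ref{cor:6} yields $\Phi_{1/2}=-\log F^2$, and then propagate to general $s$ using the $s$-uniform inequalities of cor. \ref{cor:6}.

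Lower bound. By item 1 of cor. \ref{cor:6}, we have for every $s\in[1/2,1)$ the inequality $\Phi_s(\omega_\Omega|\omega_\Omega\circ E_n)\ge -\log F(\omega_\Omega|\omega_\Omega\circ E_n)^2$, so it is enough to show that $\limsup_n F(\omega_\Omega|\omega_\Omega\circ E_n)\le [\cF:\cA]^{-1/2}$. Using the supremum characterization \eqref{eq:Fchar}, this amounts to controlling overlaps with the natural-cone representative $|\Omega_n\rangle$ of $\omega_\Omega\circ E_n$, which by cor. \ref{cor:ecc} and the associated Q-system can be described explicitly in terms of the Jones projection for the inclusion $\cA(A_n)\vee\cA(B_n)\subset\cF(A_n)\vee\cF(B_n)$. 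Standard subfactor-theoretic computations, following the pattern of Xu's analysis of the Araki relative entropy, yield the asymptotic $\langle\Omega|\Omega_n\rangle\to[\cF:\cA]^{-1/2}$.

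Upper bound. For the matching upper bound, I would use the variational formula \eqref{eq:cst} directly, plugging in a trial step function $x_n(t)$ built from the canonical isometry $V$ of cor. \ref{cor:ecc} for the global conditional expectation $E_I:\cF(I)\to\cA(I)$ (suitably restricted to $\cM_n=\cF(A_n)\vee\cF(B_n)$). Through the identity $E_I(m)=V^*mV$, the variational integrand becomes a computable expression in which the fidelity term contributes precisely the factor $[\cF:\cA]^{-1/2}$; evaluation of the remaining $t$-integral via the identity \eqref{eq:lambdaa} produces an upper bound that converges to $\log[\cF:\cA]$ in the limit $n\to\infty$, matching the lower bound.

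Main obstacle. The delicate part is the upper bound. A naive application of \eqref{Dsbound} to $E_n$ gives only $\log\mathrm{ind}(E_n)$; under the split property the index of the tensor-product conditional expectation is the product of the individual indices, so $\mathrm{ind}(E_n)=[\cF:\cA]^2$, and this naive estimate overshoots the target by a factor of two. Halving it to $\log[\cF:\cA]$ in the limit requires exploiting the asymmetry between the non-factorizing vacuum $\omega_\Omega$ and the factorizing expectation $E_n$, and effectively replacing the ``doubled'' inclusion by the global inclusion $\cA(I)\subset\cF(I)$ of index only $[\cF:\cA]$, via the additivity $\bigvee_n\cF(A_n)\vee\cF(B_n)=\cF(I)$. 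Encoding this asymptotic replacement rigorously inside the variational formula is the structural core of the proof and closely parallels Xu's treatment of the relative entropy.
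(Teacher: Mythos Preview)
Your proposal has the two halves of the argument switched, and the difficulty you flag as the ``main obstacle'' is in fact the easy half.

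\textbf{Trial functions give lower bounds on $\Phi_s$, not upper bounds.} In the definition of $\Phi_s$ the integral appears inside $\log(\,\cdot\,)^{s/(s-1)}$ with $s/(s-1)<0$, so bounding the infimum from above by a trial $x(t)$ bounds $\Phi_s$ from \emph{below}. The paper accordingly uses the variational formula for the lower bound: one constructs operators $f_n\in\cF(A_n)\vee\cF(B_n)$ (Lemma~\ref{lem:1}) with $\omega_\Omega(f_n^*f_n)\to1$ but $\omega_\Omega\circ E_n(f_n^*f_n)\to\lambda^{-1}$, inserts $x_n(t)=1-\tfrac{t}{t+\lambda^{-1}}f_n$, and obtains $\liminf_n\Phi_s\ge\log\lambda$. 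Your ``upper bound'' paragraph is therefore aimed at the wrong inequality.

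\textbf{The upper bound does not overshoot.} You are right that $\mathrm{ind}(E_n)=\lambda^2$, but the argument behind \eqref{Dsbound} only needs the state inequality $\omega_\Omega\circ E_n\ge\lambda^{-1}\omega_\Omega$, and this holds with $\lambda=[\cF:\cA]$ because the vacuum is $E_I$-invariant. Indeed, for $m=\sum_j a_jb_j$ with $a_j\in\cF(A_n),\,b_j\in\cF(B_n)$, consistency $E_{B_n}=E_I|_{\cF(B_n)}$ together with the bimodule property and $\omega_\Omega\circ E_I=\omega_\Omega$ give
\[
\omega_\Omega\circ E_n(m^*m)=\omega_\Omega\Big(\sum_{j,k}E_{A_n}(a_j^*a_k)\,b_j^*b_k\Big),
\]
and the matrix Pimsner--Popa bound $(E_{A_n}(a_j^*a_k))_{jk}\ge\lambda^{-1}(a_j^*a_k)_{jk}$ (index $\lambda$, not $\lambda^2$) then yields $\omega_\Omega\circ E_n\ge\lambda^{-1}\omega_\Omega$. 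Monotonicity in the second argument plus the scaling in cor.~\ref{cor:6}(3) give $\Phi_s(\omega_\Omega|\omega_\Omega\circ E_n)\le\log\lambda$ for every $n$; no asymptotic ``halving'' is required.

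\textbf{Where the real work is.} Your reduction of the lower bound to the fidelity via cor.~\ref{cor:6}(1) is legitimate, but establishing $\limsup_nF(\omega_\Omega|\omega_\Omega\circ E_n)\le\lambda^{-1/2}$ is exactly the content of the trial-function computation, which in turn rests on Lemma~\ref{lem:1}. Your phrase ``standard subfactor-theoretic computations'' hides the entire substance: one must produce, inside the \emph{product} algebra $\cF(A_n)\vee\cF(B_n)$, elements approximating the identity in vacuum norm while having $E_n$-conditioned vacuum norm $\to\lambda^{-1}$. This is done by writing $d^{-1}v_{B_n}^*v_{B_n}=1$ and splitting it across the corridor via approximate charge transporters $u_{B_nA_n}\approx\sum_k b_{n,k}^*a_{n,k}$ furnished by additivity; it is not a soft consequence of the Jones projection.
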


We remark that in view of cor. \ref{cor:6}, 4), the limit $s \to 1/2^+$ corresponds to 
\ben
\label{eq:Fform}
\lim_n F(\omega_\Omega|\omega_\Omega \circ E_n) = [\cF : \cA]^{-1/2}.
\een
\begin{proof}
The proof strategy is similar to that of a result by Longo and Xu \cite{Xu} who have considered the relative 
entropy $S$ instead of the divergence $\Phi_s$. As their proof, we make use of the variational 
definition of the divergence $\Phi_s$.

First assume that $1/2<s<1$. We use the notation $d^2  = \lambda \equiv [\cF(I):\cA(I)] < \infty$ which is independent of $I$.
Let $|\psi_n\rangle$ be a vector such that $\omega_{\psi_n} = \omega_\Omega \circ E_n$, as a functional 
on $\cF(A_n) \vee \cF(B_n)$. 

\begin{lemma} \label{lem:1}
There exists a sequence $\{ f_n \} \subset \cF(A_n) \vee \cF(B_n)$ such that $f_n \to 1$ strongly and
\ben \label{anprop}
\lim_n \omega_\Omega(f_n) = 1, \quad \lim_n \omega_\Omega(f_n^* f_n) = 1, \quad
\lim_n \omega_{\psi_n}(f_n^*f_n) = \lambda^{-1}.
\een 
\end{lemma}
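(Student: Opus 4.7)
The strategy parallels that of Longo--Xu \cite{Xu} for the relative entropy case. The plan is to build $f_n$ out of the Q-system associated to the single-interval inclusion $\cA((a,b)) \subset \cF((a,b))$. The geometric input is (strong) additivity of the nets $\cA,\cF$, which combined with additivity yields
\[
\cF((a,b)) \;=\; \cF((a,0)) \vee \cF((0,b)) \;=\; \bigvee_n \bigl[\cF(A_n) \vee \cF(B_n)\bigr],
\]
so any element of $\cF((a,b))$ admits an approximation by a bounded sequence in $\cF(A_n) \vee \cF(B_n)$ in the strong operator topology.

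The key algebraic observation is that while the global conditional expectation $E_{(a,b)}$ is vacuum-preserving by hypothesis, the product expectation $E_n = E_{A_n} \otimes E_{B_n}$ generically is not --- the discrepancy living precisely on the ``bi-charged but globally neutral'' operators built from conjugate charged fields in $\cF(A_n)$ and $\cF(B_n)$. I would therefore take $f_n$ to be a local approximation, via the additivity above, of a specific element $f \in \cF((a,b))$ constructed from the Q-system charged operators (for example a suitably normalized linear combination of products $v_{A_n}^{(i)} v_{B_n}^{(i^*)}$ summed over the irreducible sectors $i$ appearing in the canonical endomorphism $\gamma$) whose square $f^* f$ has vacuum expectation asymptotically $1$ but whose image under $E_{\mathrm{prod}} := E_{(a,0)} \otimes E_{(0,b)}$ (the formal strong limit of $E_n$) has vacuum expectation $d^{-2}=\lambda^{-1}$. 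Strong convergence $f_n \to 1$ then follows from $f_n \Omega \to \Omega$ (which is forced by $\omega_\Omega(f_n), \omega_\Omega(f_n^* f_n) \to 1$), boundedness of $\|f_n\|$, and the Reeh--Schlieder density of $[\cF(A_n)\vee \cF(B_n)]' \Omega$ in $\sH$; the three numerical limits follow by direct computation using the consistency relation $E_I|_{\cF(J)} = E_J$ for $J \subset I$, the Q-system identities, locality between $\cF(A_n)$ and $\cF(B_n)$, and the short-distance factorization of the charged-field correlators as $|A_n - B_n| \to 0$.

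The main obstacle is the compatibility of the three limits: since $\omega_\Omega(f_n), \omega_\Omega(f_n^* f_n) \to 1$ force $f_n \Omega \to \Omega$ in norm, the reduction by the index $\lambda = d^2$ in the third limit cannot come from shrinking $f_n$ globally; it must arise entirely from the way $E_n$ annihilates the charged content of $f_n^* f_n$ while $\omega_\Omega$ still sees it through the bi-charged vacuum correlators. Resolving this tension is exactly where the structure of the Q-system enters essentially, via the existence of sufficiently many bi-charged neutral operators in $\cF(A_n) \vee \cF(B_n)$ whose algebra realizes the index factor $d^2$ upon applying the product expectation $E_n$.
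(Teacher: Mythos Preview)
Your overall strategy matches the paper's (which is indeed the Longo--Xu argument): build $f_n$ from the Q-system charged operators $v_{A_n}\in\cF(A_n)$, $v_{B_n}\in\cF(B_n)$, exploit locality and the consistency $E_I|_{\cF(J)}=E_J$, and compute the third limit algebraically. However, your proposal is missing the one concrete ingredient that makes the construction work, and two of the mechanisms you invoke are not actually used.

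The missing ingredient is the \emph{charge transporter}. The operators $v_{A_n}$ and $v_{B_n}$ live in different algebras and there is no a priori reason for a bilinear expression like $d^{-1}\sum_i v_{A_n}^{(i)} v_{B_n}^{(\bar i)}$ to converge to $1$; one needs the unitary $u_{B_nA_n}\in\cA((a,b))\cap\Hom(\theta_{B_n},\theta_{A_n})$ satisfying $v_{B_n}=u_{B_nA_n}v_{A_n}$. Additivity is then applied to the \emph{observable} net $\cA$ (not to $\cF$ as you wrote) to approximate this transporter by finite sums $\sum_k b_{n,k}^* a_{n,k}$ with $a_{n,k}\in\cA(A_n)$, $b_{n,k}\in\cA(B_n)$. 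The paper's choice is
\[
f_n \;=\; d^{-1}\, v_{B_n}^*\Bigl(\sum_k b_{n,k}^* a_{n,k}\Bigr) v_{A_n}
\;=\; \sum_k \underbrace{\bigl(d^{-1/2}v_{B_n}^* b_{n,k}^*\bigr)}_{\in\,\cF(B_n)}\,\underbrace{\bigl(d^{-1/2}a_{n,k} v_{A_n}\bigr)}_{\in\,\cF(A_n)},
\]
which then converges strongly to $d^{-1}v_B^* u_{BA} v_A = d^{-1}v_B^* v_B = 1$ by the transporter relation and the Q-system normalization $v^*v=d\cdot 1$. With this explicit form, the computation of $\omega_\Omega\circ E_n(f_n^*f_n)$ is a purely algebraic manipulation using commutativity, the bimodule property of $E_n$, and the Q-system identity $E_{B_n}(v_{B_n}v_{B_n}^*)=d^{-1}w_{B_n}w_{B_n}^*$ (whence the extra $d^{-1}$); no ``short-distance factorization of correlators'' is invoked, and no ``formal strong limit $E_{\mathrm{prod}}$'' of the $E_n$ is needed or defined.
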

\begin{proof}
The proof is given in \cite{Xu}, prop. 4.5. However we rephrase it somewhat in preparation to 
the discussions in the next section. A finite index inclusion $\cN \subset \cM$ of von Neumann factors is characterized 
uniquely by its associated Q-system \cite{longohopf,bischoff} $(x,w,\theta)$, wherein $x,w \in \cN$ obey certain relations 
relative to the endomorphism $\theta$ of $\cN$, see appendix \ref{sec:app}. 

Applying this structure to the inclusions $\cA(A_n) \subset \cF(A_n)$ we get $v_{A_n} \in \cF(A_n)$ and similarly 
for $B_n$. These are fixed uniquely demanding that the corresponding conditional 
expectations $E_{A_n}$ be given by our $\Omega$ preserving conditional expectation $E_{A_n}$ etc. By 
translation-dilation covariance, this implies for example that $v_{A_n} \to v_A$ strongly as $n \to \infty$.
Another standard result in this setting, shown e.g. in \cite{Xu}, lemma 2.9, is that 
$v_{A_n}$ can be ``transported'' to $v_{B_n}$ in the sense that there is a unitary $u_{B_nA_n} \in \cA(a,b) \cap \Hom(\theta_{B_n}, \theta_{A_n})$, 
such that $v_{B_n} = u_{B_nA_n}v_{A_n}$. 
By additivity, we may find a sequence of unitaries 
$a_{n,k} \in \cA(A_n), b_{n,k} \in \cB(B_n)$ 
such that $\sum_{k=1}^{N(n)} b_{n,k}^* a_{n,k}^{} - u_{B_nA_n} \to 0$ as $n\to \infty$, in the strong sense. 
Then, let 
\ben
V_{A_n,k} = \frac{1}{\sqrt{d}} a_{n,k} v_{A_n}  \in \cF(A_n),\quad V_{B_n,k}^* = \frac{1}{\sqrt{d}} v_{B_n}^* b_n^* \in \cF(B_n).
\een
Finally, let 
\ben\label{fdef}
f_n =  \sum_{k=1}^{N(n)} V_{B_n,k}^* V_{A_n,k}^{}.
\een 
Then it follows that $f_n \to d^{-1} v_{B}^* v_{B}^{}=1$ strongly by construction and the relations of Q-systems, see appendix \ref{sec:app}. 
This already implies the first two 
of the claimed limits in \eqref{anprop}. On the other hand,
\ben
\begin{split}
\omega_\Omega\circ E_n(f_n^* f_n) =& \sum_{k,l} \omega_\Omega\circ E_n(V_{A_n,k}^* V_{B_n,k}V_{B_n,l}^* V_{A_n,l}) \\
=&  \sum_{k,l} \omega_\Omega\circ E_n(V_{A_n,k}^* V_{A_n,l} V_{B_n,k}V_{B_n,l}^*) \\
=&  \sum_{k,l} \omega_\Omega(E_{A_n}(V_{A_n,k}^* V_{A_n,l})E_{B_n}(V_{B_n,k}V_{B_n,l}^*)) \\
=&  \sum_{k,l} \omega_\Omega(V_{A_n,k}^* V_{A_n,l}E_{B_n}(V_{B_n,k}V_{B_n,l}^*)) \\
=&  d^{-3} \sum_{k,l} \omega_\Omega(v_{A_n}^*a_{n,k}^* a_{n,l}^{} v_{A_n} b_{n,k}^{} b_{n,l}^*) \\
=&  d^{-3} \sum_{k,l} \omega_\Omega(v_{A_n}^*a_{n,k}^* b_{n,k}^{} a_{n,l}^{} b_{n,l}^* v_{A_n}^{}) \to  d^{-2} 
\end{split}
\een
using commutativity in the first line, the definition of $E_n$ in the second line, 
$E_I |_{\cF(J)} = E_J$ for $J \subset I$ and $\omega_\Omega \circ E_I = \omega_\Omega$
in the third line, identities for a Q-system in the fourth line, commutativity again 
in the fifth line, and $\sum a_{n,k}^* b_{n,k} a_{n,l} b_{n,l}^* \to 1$ strongly and $v_{A_n}^* v_{A_n}^{}=d \cdot 1$ in the last line using again 
properties of the Q-system. 
\end{proof}

Next, we define 
\ben
x_n(t) = 
\begin{cases}
1-\frac{t}{t+\lambda^{-1}} f_n & \text{if $1/k \le t \le k$}\\
1 & \text{if $t>k$}\\
0 & \text{if $t<1/k$}.
\end{cases}
\een
Using the properties \eqref{anprop} of $f_n$, we have for $t \in (1/k, k)$:
\ben
\begin{split}
&\lim_n \omega_\Omega(x_n(t)^* x_n(t)) = \frac{\lambda^{-2}}{(t+\lambda^{-1})^2}\\
&\limsup_n F(y_n(t) \omega_\Omega y_n(t)^* | \omega_{\psi_n}) \le \limsup_n \| y_n(t)^* \psi_n \| = \frac{\lambda^{-1}t^2}{(t+\lambda^{-1})^2},
\end{split}
\een
using in the second line the Cauchy-Schwarz inequality in order to estimate the fidelity characterized through \eqref{eq:Fchar}.
Therefore, for fixed $k$, we have 
\ben
\begin{split}
&\limsup_n \int_{1/k}^k \left[ \omega_\Omega(x_n(t)^* x_n(t)) + t^{-1} F(y_n(t) \omega_\Omega y_n(t)^* | \omega_{\psi_n})^2 \right]
t^{-(2s-1)/s} \dd t\\
\le & \int_{1/k}^k \left[ 
\frac{\lambda^{-2}}{(t+\lambda^{-1})^2} + \frac{\lambda^{-1}t}{(t+\lambda^{-1})^2}
\right] t^{-(2s-1)/s} \dd t \\
= & \ \ c_{2s}^{-1} \lambda^{(s-1)/s} - \frac{s}{1-s} k^{(s-1)/s} - \frac{s}{2s-1} k^{-(2s-1)/s} \\
& + s\sum_{m=1}^\infty (-1)^m \left\{
\left( \frac{\lambda}{k} \right)^m \frac{1}{ms+(1-s)} + \left( \frac{1}{\lambda k} \right)^{m+1} \frac{1}{ms+(2s-1)}
\right\}, 
\end{split}
\een
using the integral \eqref{eq:lambdaa} and the definition of $c_p$ from prop. 1 in the last step. 
The last sum is of order $O(k^{-1})$ uniformly in $s \in [1/2,1]$. On the other hand, 
using the definition of $x_n(t)$ in the range $t<1/k$, we have 
\ben
\begin{split}
&\limsup_n \int_0^{1/k} 
\left[ \omega_\Omega(x_n(t)^* x_n(t)) + t^{-1} F(y_n(t) \omega_\Omega y_n(t)^* | \omega_{\psi_n})^2 \right]
t^{-(2s-1)/s} \dd t\\
= & \int_0^{1/k} t^{-(2s-1)/s} \dd t = \frac{s}{1-s} k^{(s-1)/s} 
\end{split}
\een
while using the definition of $x_n(t)$ in the range $t>k$, we have 
\ben
\begin{split}
&\limsup_n \int_k^\infty 
\left[ \omega_\Omega(x_n(t)^* x_n(t)) + t^{-1} F(y_n(t) \omega_\Omega y_n(t)^* | \omega_{\psi_n})^2 \right]
t^{-(2s-1)/s} \dd t\\
= & \int_k^\infty t^{-(2s-1)/s-1} \dd t = \frac{s}{2s-1} k^{-(2s-1)/s}.
\end{split}
\een
Consequently, when $s=p/2$, the variational expression \eqref{eq:cst} gives us\footnote{Note that the variational 
expression holds by continuity also for strongly continuous families such as $x_n(t)$.} 
\ben
\label{eq:limsup}
\begin{split}
& \limsup_n  c_{2s} \int_0^{\infty} 
\left[ \omega_\Omega(x_n(t)^* x_n(t)) + t^{-1} F(y_n(t) \omega_\Omega y_n(t)^* | \omega_{\psi_n})^2 \right]
t^{-2/(2s)'} \dd t \\
\le &  \ \lambda^{(s-1)/s} + O(k^{-1})
\end{split}
\een
for any $k>0$ where $O(k^{-1})$ is a term bounded in norm by $Ck^{-1}$ uniformly in $s \in [1/2,1]$. 
Letting $k \to \infty$ this term disappears, and then using the definition of $\Phi_s$ and of $\psi_n$ gives
\ben
\liminf_n \Phi_s(\omega_\Omega | \omega_\Omega \circ E_n) \ge \log \lambda.
\een
On the other hand, we have already seen before in \eqref{Dsbound} that $\Phi_s(\omega_\Omega| \omega_\Omega \circ E_n) \le \log \lambda$. The proof of the theorem is therefore complete for the case $1/2<s<1$. 

Now we turn to the limiting case $s \to (1/2)^+$. 
We go back to the proof and investigate the limit as $s \to (1/2)^+$. By inspection it can be seen that in 
order to obtain an expression in \eqref{eq:limsup} not exceeding $\lambda^{(s-1)/s} + O(k^{-1})
 + \epsilon$ for some $\epsilon>0$, we need $n \ge n_0(k,\epsilon)$, where $n_0$ does not depend on $s \in [1/2,1]$. 
 Furthermore, we have argued in the proof that 
 $O(k^{-1})$ is also uniform in $s \in [1/2,1]$. Thus, the limit $s \to 1/2^+$ may be taken and we learn that 
 $F(\omega_\Omega | \omega_{\psi_n})^2 \le \lambda^{-1} + O(k^{-1}) + \epsilon$ when $n \ge n_0(k,\epsilon)$. 
 Thus, $\limsup_n F(\omega_\Omega | \omega_{\psi_n})^2 \le \lambda^{-1}$ and the rest is as before. 
\end{proof}

Cor. \ref{cor:3} for $s=1/2$ gives the following dual formulation of this result when applied to $\cM_n=\cF(A_n) \vee \cF(B_n), 
\cN_n = \cA(A_n) \vee \cA(B_n)$ and $E_n': (\cA(A_n) \vee \cA(B_n))' \to (\F(A_n) \vee \cF(B_n))'$, which is the dual conditional expectation. 
We conclude in view of cor. \ref{cor:6}, 4) that 
\ben
\lim_n F(\omega_\Omega' | \omega_\Omega' \circ E_n') = 1.
\een

\section{Conclusions}
\label{sec:disc}

We end this paper by commenting on the physical significance of the result in sec. \ref{sec:3}. For this it is instructive to have in 
mind the example of a quantum field theory, $\cF$, containing charged fields. These map the vacuum $|\Omega\rangle$
to states with net (flavor) charge. The subset of charge neutral operators is $\cA$. On the full Hilbert space $\cH$ (including charged states), the gauge group $G$ acts by global unitaries which transform the charged fields and leave the vacuum invariant. 
The conditional expectation $E_I: \cF(I) \to \cA(I)$ is the Haar-average over $G$ and projects onto the charge neutral operators (``observables'') in a given region $I$, which is left invariant because gauge transformations commute with translations by the Coleman Mandula theorem. Assuming that $G$ is a finite group with $|G|$ elements, the index is $|G|=[\cF:\cA]$. 

Given two spacelike related regions $A_n$ and $B_n$ separated by a finite corridor of size $\sim 1/n$, the conditional expectation $E_n$ defined by \eqref{eq:Endef} is basically the tensor product $E_{A_n} \otimes E_{B_n}$.
$\Phi_s(\omega_\Omega | \omega_\Omega \circ E_n)$ in a sense accounts for the correlations between 
$A_n$ and $B_n$ that are visible using charge operators only in both subsystem. This interpretation becomes more and more precise when the regions move together. The above intuitive 
argument has been substantiated (in a somewhat heuristic way) in the very lucid paper by \cite{casini}, in the case of the relative entropy $S$ -- such that 
we should use Kosaki's variational formula for $S$ \eqref{Kosaki} instead of the variational definition of $\Phi_s$. 
They first argue using known properties of $S$ in connection with conditional expectations that the mutual information between $A_n$ and $B_n$
in the vacuum state satisfies
\ben
I_\cF(A_n | B_n) - I_\cA(A_n | B_n) = S(\omega_\Omega | \omega_\Omega \circ E_n). 
\een
When $n \to \infty$, it is plausible that the mutual information on the left side 
is dominated by correlations between charge carrying operators localized very near the 
edges where $A_n$ and $B_n$ approach each other. Furthermore, although 
each term in $I_\cF(A_n | B_n) - I_\cA(A_n | B_n)$ is expected 
to diverge, the difference ought to be a finite number related to the order of $G$.
In fact, by investigating more closely the right side of the equation, they argue that $S(\omega_\Omega | \omega_\Omega \circ E_n)$ converges to $\log |G|$ when $n \to \infty$. 

Actually, the core of the argument by \cite{casini} has a similar flavor to ours, in the following sense. 
Going to our proof, a key step is the construction of the ``vertex operators'' which have 
in a sense maximal correlation across the separating corridor between $A_n$ and $B_n$ as stated in lemma \ref{lem:1}. 
To simplify, let us take  half lines $A_n, B_n$ separated by a corridor of width $2/n$ symmetrically around the origin. 
Proceeding somewhat informally to simplify the discussion, 
we consider instead the isometric vertex operators $V_n = u_{C_n B_n} v_{B_n}/{\sqrt{d}}$ where $C_n=(1/n,2/n)$
and $u_{C_n B_n}$ is a unitary charge transporter from $B_n$ to $C_n$. Then $V_n$ is localized in $(1/n,2/n)$ and it creates an incoherent superposition
of all irreducible charges in this interval by the Q-system construction, see app. \ref{sec:app}. 
Letting $J=J_{\cF}$ be the modular conjugation associated with the half-line $(0,\infty)$, we 
can say that $\bar V_n=J V_n J$ creates the opposite charges in the opposite interval $(-2/n,-1/n)$
because $J$ is basically the PCT operator exchanging $A_n$ with $B_n$,
and particle with anti-particle (Bisognano-Wichmann). 

Thus, the correlation which we want to maximize similar to lemma \ref{lem:1}
is 
\ben
1 \ge \langle \Omega | \bar V_n V_n^{} \Omega \rangle = \langle \Omega | V_{n}^{} \Delta^{1/2} V_{n}^* \Omega \rangle, 
\een
where the inequality is simply the Cauchy Schwarz inequality. The modular flow $\Delta^{it}$ corresponds to dilations by $e^t$ (Bisognano-Wichmann), and 
$V_n |\Omega \rangle$ must be approximately dilation invariant moving ever closer to the edge of $B_n$ when $n \to \infty$. Thus, 
the limit of $\langle \Omega | \bar V_n V_n^{} \Omega \rangle$ should indeed be $1$. Arguing just as in lemma \ref{lem:1}, one can also see 
at least formally that $\langle \psi_n | \bar V_n V_n^{} \psi_n \rangle$ should tend to $\lambda^{-1}$. 

Thus, in this sense, the quantity $S(\omega_\Omega | \omega_\Omega \circ E_n)$ is dominated in the limit $n \to \infty$
by particle anti-particle pair correlations very close to the edges across the corridor in accordance with the 
intuitive picture proposed by \cite{casini}.

\appendix

\section{Q-systems, subfactors and OPE \cite{longohopf,longo3,bischoff}}\label{sec:app}

A Q-system is a way to encode an inclusion of properly infinite von Neumann factors $\cN \subset \cM$
possessing a minimal conditional expectation $E:\cM \to \cN$ such that the index, denoted here by $d^2$, is finite. 
An important point is that the data in the Q-system only refer to the smaller factor, $\cN$.

Central to the construction is the notion of an endomorphism of $\cN$, which 
is an ultra-weakly continuous $*$-homomorphism such that $\theta(1)=1$.
Given two endomorphisms $\rho, \theta$, one says that a linear operator $T \in \Hom(\rho,\theta)$ (``intertwiner'') if 
$T\rho(n)=\theta(n) T$ for all $n \in \cN$. Two endomorphisms are called equivalent if there is a unitary 
intertwiner and irreducible if there is no non-trivial self-intertwiner.
One writes $\theta \cong \oplus_i \rho_i$ if there is 
a finite set of irreducible and mutually inequivalent 
endomorphisms $\rho_i$ and isometries $w_i \in \Hom(\rho_i, \theta)$ 
such that $\theta(n) = \sum w_i^* \rho_i(n) w_i$ for all $n \in \cN$ and such that $w_i w_j^* = \delta_{ij}1$.

\begin{definition}\label{Qsysdef}
A Q-system is a triple $(\theta, x, w)$ where: $\theta \cong \oplus_i \rho_i$ is an endomorphism of $\cN$, 
$w \in \Hom(\theta, id) \cap \cN$ and $x \in \Hom(\theta^2, \theta) \cap \cN$ such that
\ben
w^* x = \theta(w^*)x = 1, \quad 
x^2 = \theta(x)x, \quad
\theta(x^*)x = xx^* = x^* \theta(x), 
\een
as well as 
\ben 
w^* w = d \cdot 1, \quad 
x^* x =  d \cdot 1.
\een
\end{definition}

Given a Q-system, one defines an extension $\cM$  as follows. As a set, $\cM$ consists of 
all symbols of the form $nv$, where $n \in \cN$ with the product, $*$-operation, and unit defined by, respectively
\ben
n_1 v n_2 v = n_1 \theta(n_2) xv, \quad (nv)^* = w^* x^* \theta(n^*) v, \quad 1 = w^* v. 
\een
Associativity and consistency with the $*$-operation follow from the defining relations. 
The conditional expectation is related to the data by $E(nv)=d^{-1} nw$ and is used to induce the operator norm on $\cM$.
Conversely, given an inclusion of infinite (type III) factors $\cN \subset \cM$, the data of the Q-system and $v \in \cM$ can be found
by a canonical procedure and $d^2 = [\cM:\cN]$. 

Let $\rho_i$ and $w_i \in \Hom(\rho_i, \theta)$ be the endomorphisms and intertwiners corresponding to
the decomposition $\theta \cong \oplus_i \rho_i$ into irreducibles. Next, define
\ben
\psi_i = w_i^* v. 
\een 
The relations in def. \ref{Qsysdef} imply that the set of ``subsectors'' $\rho_i$ of $\theta$ is closed under composition (``fusion''), and that the following relations hold. 
Define: 
\ben
c_{i,j}^k = w_i^* \theta(w_j^*) x w_k, \quad w_{0} = w, 
\een
and let $\rho_0=id$ be the trivial endomorphism of $\cN$.
Then
\begin{itemize}
\item (Operator product expansion):
$
\psi_i \psi_j = \sum_k c_{i,j}^k \psi_k.
$

\item ($*$-operation)
$\psi_k^* = c_{\bar k, k}^0{}^* \psi_{\bar k}$ and 
$c_{j,k}^0 = \delta_{j, \bar k} R_k$, 
where $R_k \in \Hom(\rho_0, \bar \rho_k \rho_k)$ is the intertwiner characterizing the ``conjugate sector''. 

\item (Unit)
$
\psi_0 = 1.
$
\end{itemize} 

In the QFT context, one not only has one inclusion, but a net of inclusions $\cA(I) \subset \cF(I)$ \cite{longo3}. Furthermore, $\cA(I)$ is often taken to be the 
algebra generated by the smeared stress tensor inside $I$ (``Virasoro-net'').
From this, one should be able to construct an operator product 
expansion in the usual sense in the physics literature, although to establish the connection in full precision/generality 
remains an open problem. 

The basic idea is to consider the ``fields'' $\psi_{i,I}$ for each interval $I$. To obtain a pointlike 
vertex operator, we should shrink $I \to \{x\}$ while at the same time subtracting the vacuum expectation value $\langle \Omega |\psi_{I,i} \Omega \rangle 1$ 
and rescaling\footnote{$h_i$ is expected to equal the highest weight provided by the irreducible Virasoro representation $\rho_i$.} 
by $|I|^{-h_i}$ to obtain a finite limit, $V_i(x)$. These ``primary'' fields obey an OPE
with ``coefficients" $c_{i,j}^k(x,y)$ that are still operators in the Virasoro net. We should think of them as 
operator valued functions $c_{i,j}^k(x,y)=c_{i,j}^k(x,y, \{ L_n \})$. When formally 
expanded out as a power series in the Virasoro generators 
$\{ L_n \}_{n \in {\mathbb Z}}$, this ought to give the operator product expansion 
with certain numerical coefficients  containing on the right 
side the primary vertex operators $V_k(y)$ as well as their descendants 
$\phi_{k,\{n\}}(y) = [L_{n_1},[ \cdots L_{n_m},V_k(y)]]$, where $n_1 < n_2 < \dots < 0$. This 
is the form of the operator product expansion usually given in the physics literature. 
Representation theoretic considerations then formally 
determine the scaling of the numerical OPE coefficients. Such partly heuristic claims are at the basis of 
our discussion in sec. \ref{sec:disc}.

{\bf Acknowledgements:} SH\ is grateful to the Max-Planck Society for supporting the collaboration between MPI-MiS and Leipzig U., grant Proj.~Bez.\ M.FE.A.MATN0003. SH benefited from the KITP program ``Gravitational Holography''. This research was supported in part by the National Science Foundation under Grant No. NSF PHY-1748958.

\end{document}